\newtheorem{theorem}{\bf\em Theorem}
\newtheorem{lemma}{\bf\em Lemma}
\newtheorem{proposition}{\bf\em Proposition}
\newtheorem{corollary}{\bf\em Corollary}
\newtheorem{example}{\bf\em Example}
\newcommand{\Z}{{\mathbb{Z}}}
\newcommand{\C}{{\mathcal{C}}}
\begin{document}
\title{Classification of the $Z_2Z_4$-Linear Hadamard Codes\\ and Their Automorphism Groups%
\thanks{The work of the first author
has been partially supported
by the Russian Foundation
    for Basic Research
      under Grant 13-01-00463-a.
The work of the second author
has been partially supported
by the Spanish MICINN
under Grants TIN2010-17358
         and TIN2013-40524-P
and by the Catalan AGAUR
  under Grant 2014SGR-691.
The material in this paper was presented in part at the 4th International Castle Meeting on Coding Theory and Applications, Palmela, Portugal, 2014 \cite{KroVil:2014castle}.}%
}
\author{Denis S. Krotov%
\thanks{D. Krotov is with the Sobolev Institute of Mathematics, Novosibirsk
630090, Russia, and also with the Novosibirsk State University, Novosibirsk
630090, Russia (e-mail: krotov@math.nsc.ru).}
and Merc\`e Villanueva%
\thanks{M. Villanueva is with the Department of Information and Communications
Engineering, Universitat Aut\`{o}noma de Barcelona, 08193-Bellaterra, Spain
\mbox{(e-mail: merce.villanueva@uab.cat)}.}%
\thanks{This is the peer reviewed version of the paper published in the IEEE Transactions on Information Theory, vol. 61, no. 2, pp. 887--894, 2015, DOI \href{http://doi.org/10.1109/TIT.2014.2379644}{10.1109/TIT.2014.2379644}.
\copyright\ 2014 IEEE. Personal use is permitted, but republication/redistribution requires IEEE permission.}%
}
\date{}

\maketitle
\thispagestyle{empty}
\begin{abstract}
\boldmath
A $Z_2Z_4$-linear Hadamard code of length $\alpha+2\beta=2^t$ is a binary Hadamard code
which is the Gray map image of a
$Z_2Z_4$-additive code with $\alpha$ binary coordinates and $\beta$ quaternary coordinates.
It is known that there are exactly $\lfloor \frac{t-1}{2} \rfloor$ and $\lfloor\frac{t}{2} \rfloor$
nonequivalent $Z_2Z_4$-linear Hadamard codes of length $2^t$, with $\alpha=0$ and $\alpha \not =0$, respectively,
for all $t\geq 3$. In this paper, it is shown that each $Z_2Z_4$-linear Hadamard  code with $\alpha=0$ is equivalent to a $Z_2Z_4$-linear Hadamard  code with $\alpha \not =0$,
so there are only $\lfloor\frac{t}{2} \rfloor$ nonequivalent $Z_2Z_4$-linear Hadamard codes of length $2^t$.
Moreover, the order of the monomial automorphism group for
the $Z_2Z_4$-additive Hadamard codes and the permutation automorphism group of
the corresponding $Z_2Z_4$-linear Hadamard codes are given.
\end{abstract}
\begin{IEEEkeywords}\boldmath
Hadamard codes, $Z_2Z_4$-linear codes, additive codes, automorphism group
\end{IEEEkeywords}

\baselineskip=0.96\normalbaselineskip
\def\0{\mathbf{0}}
\def\1{\mathbf{1}}
\def\2{\mathbf{2}}
\def\3{\mathbf{3}}
\section{Introduction}\setstretch{1.06}

\newcommand\doT[1]{\accentset{\mbox{$.$}}#1}

Let $\Z_2$ and $\Z_4$ be the rings of integers modulo 2 and modulo 4,
respectively. Let $\Z_2^n$ be the set of all binary vectors of
length $n$ and let $\Z_4^n$ be the set of all quaternary vectors of
length $n$. For a vector $x=(x_1, \ldots,x_{n})\in \Z_2^n$ and a set $I \subseteq \{1, \ldots,n\}$,
we denote by $x|_{I}$ the vector $x$ restricted to the coordinates in~$I$.

Any nonempty subset $C$ of $\Z_2^n$ is a binary code
and a subgroup of $\Z_2^n$ is called a \emph{binary linear code}.
Similarly, any nonempty subset $\C$ of $\Z_4^n$ is a quaternary
code and a subgroup of $\Z_4^n$ is called a {\it quaternary linear code}.
Let $\C$ be a quaternary linear code.
Since $\C$ is a subgroup of  $\Z_4^n$,
it is isomorphic to an Abelian group
$\Z_2^{\gamma}\times \Z_4^{\delta}$,
for some $\gamma$ and $\delta$, and we say that $\C$ is of type
$2^\gamma4^\delta$ as a group. Quaternary codes can be seen as binary
codes under the usual Gray map defined as $\varphi(0)=(0,0),\ \varphi(1)=(0,1), \ \varphi(2)=(1,1),\ \varphi(3)=(1,0)$
in each coordinate. If $\C$ is a quaternary  linear code, then the binary code $C=\varphi(\C)$ is called a {\em $\Z_4$-linear code}.

Additive codes were first defined by Delsarte in $1973$ as
subgroups of the underlying Abelian group in a translation association scheme \cite{Delsarte:1973,DelsLev:98}.
In the special case of a binary Hamming scheme, that is, when the underlying Abelian group is
of order $2^n$, the additive codes coincide with the codes that are subgroups of
$\Z_2^\alpha\times\Z_4^\beta$. In order to distinguish them from additive
codes over finite fields \cite{Bier}, they are called {\em $\Z_2\Z_4$-additive codes} \cite{BFPRV2010}.
Since $\Z_2\Z_4$-additive codes are subgroups of $\Z_2^{\alpha} \times \Z_4^{\beta}$,
they can be seen as a generalization of binary (when $\beta=0$) and quaternary (when $\alpha=0$) linear codes.
As for quaternary linear codes, $\Z_2\Z_4$-additive codes
can also be seen as binary codes by considering the
extension of the usual Gray map: $\Phi:
\Z_2^{\alpha}\times\Z_4^{\beta} \longrightarrow \Z_2^{n}$, where
$n=\alpha+2\beta$, given by
\begin{multline*}
\Phi(x,y) = (x,\varphi(y_1),\ldots,\varphi(y_\beta))\\
 \forall x\in\Z_2^\alpha,\;\forall y=(y_1,\ldots,y_\beta)\in \Z_4^\beta.
\end{multline*}
If $\C$ is a $\Z_2\Z_4$-additive code, $C=\Phi(\C)$ is called a {\em $\Z_2\Z_4$-linear code}.
Moreover, a $\Z_2\Z_4$-additive code $\C$ is also isomorphic to an Abelian group
$\Z_2^{\gamma}\times \Z_4^{\delta}$,
and we say that $\C$
(or equivalently the corresponding $\Z_2\Z_4$-linear code $C=\Phi(\C)$) is of type
$(\alpha,\beta;\gamma,\delta)$.

Let $S_n$ be the symmetric group of permutations on the set $\{1,\ldots,n\}$, and
let $\mathrm{id} \in S_n$ be the identity permutation.
The group operation in $S_n$ is the function composition.
The composition
$\sigma_1 
\sigma_2$
maps any element $x$ to $\sigma_1(\sigma_2(x))$.
A $\sigma \in S_n$
acts linearly on words of $\Z_2^n$ or $\Z_4^n$ by permuting the coordinates,
$\sigma((c_1,\ldots,c_n))=(c_{\sigma^{-1}(1)},\ldots, c_{\sigma^{-1}(n)})$.

Let $\C$ be a $\Z_2\Z_4$-additive code of type
$(\alpha,\beta;\gamma,\delta)$. We can assign a
permutation $\pi_x \in S_n$ to each codeword $x=( x'_1,\ldots, x'_\alpha, x_1,\ldots,x_{2\beta}) \in C=\Phi(\C)$, such
that $\pi_x=\pi_{12} 
\pi_{34} 
\cdots 
\pi_{2\beta-1\;2\beta}$,
where
\vspace{-0,15truecm}$$
\pi_{ij} = \left\{ \begin{array}{ll}
           \mathrm{id} & \mbox{if $(x_i,x_j)=(0,0)$ or $(1,1)$} \\
           (i,j) & \mbox{otherwise}.
\end{array}
\right.
$$
Given two codewords of $C$, $x=(x',x_1,\ldots,x_{2\beta})$ and $y=(y',y_1,\ldots,y_{2\beta})$,
define $x\star y=x+\pi_x(y)$.
Then, we have that $(C,\star)$ is an Abelian group \cite{RifPuj:1997},
which is isomorphic to $(\C,+)$, since
\begin{IEEEeqnarray*}{rCl}
 x\star y&=&\big(x'+y',\, \varphi(\varphi^{-1}(x_1,x_2)+\varphi^{-1}(y_1,y_2)),
 \\ &&\ \ \ \ \ \ \ldots,\,
\varphi(\varphi^{-1}(x_{2\beta-1},x_{2\beta})+\varphi^{-1}(y_{2\beta-1},y_{2\beta}))\big)
\\
&=&\Phi\big(\Phi^{-1}(x)+\Phi^{-1}(y)\big).
\end{IEEEeqnarray*}

\setstretch{1.05}
There are $\Z_2\Z_4$-linear codes in several important classes of binary codes.
For example, $\Z_2\Z_4$-linear perfect single error-correcting codes (or 1-perfect codes)
are found in \cite{RifPuj:1997} and fully characterized in \cite{BorRif:1999}.
Also, in subsequent papers \cite{Kro:2000:Z4_Perf,BorPheRif:2003, Kro:2001:Z4_Had_Perf, PheRif:2002, PheRifVil:2006},
$\Z_2\Z_4$-linear extended perfect and Hadamard codes are studied and
classified independently for $\alpha=0$ and $\alpha \not =0$.
Finally, in \cite{PujRifSol:2009,Sol:2007RM,PePuVi11},
$\Z_2\Z_4$-linear Reed--Muller codes are also studied.
Note that $\Z_2\Z_4$-linear codes have allowed to classify more binary nonlinear codes,
giving them a structure as $\Z_2\Z_4$-additive codes.

A {\it (binary) Hadamard code} of length $n$ is a binary code with $2n$ codewords and minimum distance $n/2$
\cite{MWS}. The $\Z_2\Z_4$-additive codes that, under the Gray map, give a
Hadamard code are called {\it $\Z_2\Z_4$-additive Hadamard codes} and
the corresponding $\Z_2\Z_4$-linear codes are called {\it  $\Z_2\Z_4$-linear  Hadamard
codes}, or just {\it $\Z_4$-linear Hadamard  codes} when $\alpha=0$.
The classification of  $\Z_2\Z_4$-linear Hadamard codes is given by the following results.
For any integer $t\geq 3$ and each
$\delta\in\{1,\ldots,\lfloor (t+1)/2 \rfloor\}$, there is a unique
(up to equivalence)
$\Z_4$-linear Hadamard code of type $(0,2^{t-1};t+1-2\delta,\delta)$, and all these codes are
pairwise nonequivalent, except for $\delta=1$ and $\delta=2$,
where the codes are equivalent to the linear Hadamard code, that is, the dual of the extended Hamming code \cite{Kro:2001:Z4_Had_Perf}.
Therefore, the number of nonequivalent $\Z_4$-linear
 Hadamard  codes of length $2^t$ is $\lfloor
\frac{t-1}{2}\rfloor$ for all $t\geq 3$.
On the other hand, for any integer $t\geq 3$ and each
$\delta\in\{0,\ldots,\lfloor t/2 \rfloor\}$,  there is a unique (up to equivalence)
$\Z_2\Z_4$-linear Hadamard code of type $(2^{t-\delta},2^{t-1}-2^{t-\delta-1};t+1-2\delta,\delta)$.
All these codes are pairwise nonequivalent, except for $\delta=0$ and $\delta=1$, where the
codes are equivalent to the linear Hadamard code \cite{BorPheRif:2003}. Therefore,
the number of nonequivalent $\Z_2\Z_4$-linear Hadamard  codes
of length $2^t$  with $\alpha \neq 0$ is $\lfloor t/2 \rfloor$ for all $t\geq 3$.
In this paper, we show that any $\Z_4$-linear Hadamard code is equivalent to a
$\Z_2\Z_4$-linear Hadamard code with $\alpha \ne 0$.

Two structural properties of binary codes are the rank and
the dimension of the kernel. The \emph{rank} of a code $C$ is simply the
dimension of the linear span, $\langle C \rangle$,  of $C$.
The \emph{kernel} of a code $C$ is defined as
$\mathrm{Ker}(C)=\{x\in \Z_2^n \,|\, x+C=C \}$ \cite{BGH83}. If the all-zero vector belongs to $C$,
$\mathrm{Ker}(C)$ is a linear subcode of $C$. In general, $C$ can be written
as the union of cosets of $\mathrm{Ker}(C)$, and $\mathrm{Ker}(C)$ is the largest 
linear code for which
this is true \cite{BGH83}.
The $\Z_2\Z_4$-linear Hadamard codes
can also be classified using either the rank or the dimension of the kernel,
as it is proven in \cite{Kro:2001:Z4_Had_Perf,PheRifVil:2006}, where these parameters are computed.

Two $\Z_2\Z_4$-additive codes $\C_1$ and $\C_2$ both of type
$(\alpha,\beta;\gamma,\delta)$ are said to be {\it
monomially equivalent}, if one can be obtained from the other by permuting
the coordinates and (if necessary) changing the signs of certain
$\Z_4$ coordinates. Two $\Z_2\Z_4$-additive or $\Z_2\Z_4$-linear codes are said to be {\it
permutation equivalent} if they differ only by a permutation of
coordinates.
The {\it monomial automorphism group} of a
$\Z_2\Z_4$-additive code $\C$, denoted by  $\mathrm{MAut}(\C)$,
is the group generated by all permutations and sign-changes
of the $\Z_4$ coordinates that preserve the set of codewords of $\C$, while
the {\it permutation automorphism group} of $\C$ or $C=\Phi(\C)$, denoted by $\mathrm{PAut}(\C)$ or $\mathrm{Aut}(C)$,
respectively, is the group generated by all permutations that preserve the set of codewords \cite{HuffmanPless}.

The permutation automorphism group of a binary code is also an invariant, so it can help in the classification of some families of binary codes.
Moreover, the automorphism group can also be used in decoding algorithms and to describe some other properties like the weight distribution \cite[Ch. 8, \S 5 and Ch. 16, \S 9]{MWS}.
The permutation automorphism group of $\Z_2\Z_4$-linear (extended) $1$-perfect codes has been studied in~\cite{PheRif:2002,Kro:2011castle}.
The permutation automorphism group of (nonlinear) binary 1-perfect codes has also been studied before,
obtaining some partial results \cite{HedPasWes:2009,Hed:2005,AvgSolHed:2005,FePhVi:2011}.
For $\Z_2\Z_4$-additive Hadamard codes with $\alpha=0$, the permutation automorphism group was characterized in \cite{PePuVi14},
while the monomial automorphism group as well as the permutation automorphism group of the corresponding $\Z_2\Z_4$-linear codes  had not been studied yet. In this paper, we completely solve this problem giving the
structure and order of these two automorphism groups for~any~$\alpha$.

The structure of the current paper is as follows.
In Section~\ref{s:class}, we suggest a new representation
for the $\Z_2\Z_4$-linear Hadamard codes,
separately for the cases $\alpha =0$ and $\alpha\ne 0$,
and prove that the $\Z_2\Z_4$-linear Hadamard codes of types
$(0,n/2;\gamma,\delta)$
and $(\alpha,(n-\alpha)/2;\gamma+2,\delta-1)$
with $\alpha=n/2^{\delta-1}\ne 0$ are permutation equivalent.
In Sections~\ref{s:mautZ4} and~\ref{s:mautZ2Z4},
we establish the structure and the order
of the monomial automorphism group of the
$\Z_2\Z_4$-additive Hadamard codes,
for $\alpha=0$ and  $\alpha\ne 0$, respectively.
In Sections~\ref{s:ub} and~\ref{s:paut},
we establish the structure and the order
of the permutation automorphism group of the
$\Z_2\Z_4$-linear Hadamard codes.
The main results of the paper are
Theorem~\ref{th:equiv} on the equivalence,
Theorems~\ref{th:MAutZ4} and~\ref{th:MAutZ2Z4}
on the monomial automorphism group
of the $\Z_2\Z_4$-additive Hadamard codes,
Theorems~\ref{th:d3} and~\ref{th:d3-2}
on the permutation automorphism group
of the corresponding binary codes,
and a representation of $\Z_2\Z_4$-linear Hadamard codes
in Section~\ref{s:class}.
\section[Classification of Z2Z4-linear Hadamard codes]{Classification of $\Z_2\Z_4$-linear Hadamard codes}\label{s:class}

In \cite{Kro:2001:Z4_Had_Perf} and \cite{BorPheRif:2003},
$\Z_2\Z_4$-linear Hadamard  codes are classified independently for $\alpha=0$ and $\alpha \not =0$.
In this section, we show that each $\Z_2\Z_4$-linear Hadamard code with $\alpha=0$
is equivalent to a $\Z_2\Z_4$-linear Hadamard code with $\alpha \not =0$,
so there are only $\lfloor\frac{t}{2} \rfloor$ nonequivalent $\Z_2\Z_4$-linear Hadamard codes of length $2^t$.

\medskip
We say that a function $f$
from $\Z_2^{i} \times \Z_4^{j}$
to $\Z_2^{s} \times \Z_4^{t}$
is \emph{affine}
if $f(\overline 0)-f(x)-f(y)+f(x+y)=\overline 0$
for every $x$ and $y$
from $\Z_2^{i} \times \Z_4^{j}$
(here and in what follows,
$\overline 0$
denotes the all-zero vector).
Equivalently, $f(\cdot)-f(\overline 0)$
is a linear function,
i.e., a group homomorphism.
Let $\mathcal{B}=\mathcal{B}_{{\gamma},\doT\delta}$ be the set of all affine functions from $\Z_2^{\gamma} \times \Z_4^{\doT\delta}$ to $\Z_4$.
These $\Z_4$-valued functions on $\Z_2^{\gamma} \times \Z_4^{\doT\delta}$
can be considered as words of length $2^{\gamma+2\doT\delta}$ over $\Z_4$.
Denote $B_{{\gamma},\doT\delta}= \{ x : \Z_2^{\gamma} \times \Z_4^{\doT\delta} \to \Z_2^2 \, | \, x(\cdot)=\varphi(f(\cdot)) \mbox{ for some }  f\in \mathcal{B} \}$.

\begin{lemma}\it
 $B_{{\gamma},\doT\delta}$ is a $\Z_4$-linear Hadamard code
 of length $n=2^{\gamma+2\doT\delta+1}$ and type $(0,n/2;\gamma,\delta)$,
 where $\delta=\doT\delta +1$.
\end{lemma}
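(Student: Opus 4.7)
The plan is to verify three things directly from the definition of $\mathcal{B}_{\gamma,\dot\delta}$: the length, the group-theoretic type, and the Hadamard property (cardinality $2n$ together with minimum distance $n/2$). The length is immediate: the domain $\Z_2^\gamma\times\Z_4^{\dot\delta}$ has cardinality $2^{\gamma+2\dot\delta}$, so each $f\in\mathcal{B}$ is a $\Z_4$-word of length $\beta=2^{\gamma+2\dot\delta}$ and its Gray image has length $n=2\beta=2^{\gamma+2\dot\delta+1}$, as claimed.

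For the type, I would use the normal form for affine maps: every $f\in\mathcal{B}$ has a unique representation
$$
f(v,w)\;=\;c\;+\;2\sum_{i=1}^{\gamma}a_i v_i\;+\;\sum_{j=1}^{\dot\delta}b_j w_j,
$$
with $c,b_j\in\Z_4$ and $a_i\in\Z_2$, the point being that any homomorphism $\Z_2\to\Z_4$ must land in $2\Z_4\cong\Z_2$, whereas a homomorphism $\Z_4\to\Z_4$ is multiplication by some $b_j\in\Z_4$. The parameterization $(c,a,b)\mapsto f$ is then a group isomorphism
$$
\Z_4\oplus\Z_2^{\gamma}\oplus\Z_4^{\dot\delta}\;\cong\;\Z_2^{\gamma}\oplus\Z_4^{\dot\delta+1}\;\longrightarrow\;\mathcal{B},
$$
so $\mathcal{B}$ (and hence $B_{\gamma,\dot\delta}$) is of type $(0,n/2;\gamma,\delta)$ with $\delta=\dot\delta+1$, and in particular $|\mathcal{B}|=4\cdot2^\gamma\cdot4^{\dot\delta}=2n$.

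For the Hadamard property, since the $\star$-operation of the introduction makes $B_{\gamma,\dot\delta}$ distance invariant, it suffices to show that every nonzero $f\in\mathcal{B}$ has Lee weight (equivalently, Hamming weight of $\varphi(f)$) at least $n/2$. I split on whether some $b_j$ is a unit of $\Z_4$. If some $b_j\in\{1,3\}$, fix all the other coordinates of $(v,w)$ and let $w_j$ range over $\Z_4$: the values of $f$ then traverse all of $\Z_4$ exactly once, contributing $0+1+2+1=4$ to the Lee weight per each of the $2^{\gamma+2\dot\delta}/4$ fixed choices of the remaining variables, for a total of $n/2$. Otherwise every $b_j\in\{0,2\}$, so $f$ only takes values in the coset $c+\{0,2\}$; either the linear part is identically zero, in which case $f\equiv c$ has Lee weight $0$, $n/2$, $n$, $n/2$ for $c=0,1,2,3$, or it is a nonzero $\Z_2$-valued linear form, whose level sets each have cardinality $2^{\gamma+2\dot\delta-1}$, and a direct check for $c$ even and $c$ odd shows the Lee weight is again exactly $n/2$. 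Combining the two cases, every nonzero codeword has weight $n/2$ or $n$, so $B_{\gamma,\dot\delta}$ has minimum distance $n/2$ and is Hadamard.

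The main obstacle is really the bookkeeping in the weight computation; there is no conceptual difficulty, but one must cover all four residues of $c$ uniformly in the "all $b_j$ even" case, and keep track of how the Gray/Lee weight interacts with the $\{0,2\}$-valued linear part, to see that the two subcases still produce weight exactly $n/2$.
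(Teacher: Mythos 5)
Your proof is correct and follows essentially the same route as the paper's: count the $2n$ affine functions, observe closure under $\Z_4$-addition, read off the length, and check the minimum distance. The paper simply asserts the distance claim as "clear" and does not spell out the type computation, so your normal-form parameterization and the Lee-weight case analysis are a correct filling-in of details rather than a different approach.
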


\begin{IEEEproof}
There are $4\cdot 2^{\gamma} \cdot 4^{\doT\delta}=2n$ affine functions in $\mathcal{B}$.
The set $\mathcal{B}$ is closed under the addition over $\Z_4$; so after applying the Gray map, $B_{{\gamma},\doT\delta}$
is a $\Z_4$-linear code of length $2^{\gamma} \cdot 4^{\doT\delta} \cdot 2=n$. Clearly, the minimum Hamming
distance is $n/2$.
\end{IEEEproof}

Define the function $\varphi^+:\Z_4\to\{0,1\}$ by $\varphi^+(0)=\varphi^+(3)=0$, $\varphi^+(1)=\varphi^+(2)=1$.
Again, the $\Z_2$-valued or $\Z_4$-valued functions on $\Z_2^{\doT\gamma} \times \Z_4^\delta$
can be considered as words of length $2^{{\doT\gamma}+2\delta}$ over $\Z_2$ or $\Z_4$, respectively.
Let $\mathcal{A}$ be the set of all affine functions $f$ from $\Z_2^{\doT\gamma} \times \Z_4^\delta$ to $\Z_4$ that map the all-zero vector to $0$ or $2$: $f(\overline{0})\in\{0,2\}$.
Denote $C_{{\doT\gamma},\delta}= \{h:\Z_2^{\doT\gamma} \times \Z_4^\delta \to \Z_2 \,|\, h(\cdot)=\varphi^+(f(\cdot)) \mbox{ for some } f\in \mathcal{A}\}$.

\begin{lemma}\it\label{l:constr}
 $C_{{\doT\gamma},\delta}$ is a $\Z_2\Z_4$-linear Hadamard code of length $n=2^{{\doT\gamma}+2\delta}$ and type $(\alpha,\beta; \gamma,\delta)$,
where $\gamma=\doT\gamma+1$, $\alpha=2^{{\doT\gamma}+\delta}$ corresponding to the elements of order at most 2 of $\Z_2^{\doT\gamma} \times \Z_4^\delta$,
and $\beta=2^{{\doT\gamma}+\delta-1}(2^{\delta}-1)$ corresponding to the pairs of opposite elements of order $4$.
\end{lemma}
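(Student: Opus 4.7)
My plan is to identify $C_{\dot\gamma,\delta}$ as the Gray image $\Phi(\mathcal{C})$ of an explicit $\Z_2\Z_4$-additive code $\mathcal{C}\subseteq\Z_2^{\alpha}\times\Z_4^{\beta}$, and then read off the type, the cardinality, and the minimum distance. First I would parametrize $\mathcal{A}$: every affine $f:\Z_2^{\dot\gamma}\times\Z_4^{\delta}\to\Z_4$ with $f(\overline{0})\in\{0,2\}$ is determined by the value $f(\overline{0})$ (two choices) together with a linear part $L$ that sends each of the $\dot\gamma$ generators of $\Z_2^{\dot\gamma}$ into the order-$2$ subgroup $\{0,2\}\subset\Z_4$ (two choices each) and each of the $\delta$ generators of $\Z_4^{\delta}$ to any element of $\Z_4$ (four choices each). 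This yields $|\mathcal{A}|=2\cdot 2^{\dot\gamma}\cdot 4^{\delta}=2n$, and as an Abelian group $\mathcal{A}\cong\Z_2^{1+\dot\gamma}\times\Z_4^{\delta}=\Z_2^{\gamma}\times\Z_4^{\delta}$.

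Next I would partition the $n$ domain points into the $\alpha=2^{\dot\gamma+\delta}$ elements of order at most $2$ and the $\beta=2^{\dot\gamma+\delta-1}(2^{\delta}-1)$ pairs $\{g,-g\}$ of order-$4$ elements. For $g$ of order at most $2$, affineness and $f(\overline{0})\in\{0,2\}$ force $f(g)\in\{0,2\}$, so $\varphi^+(f(g))=f(g)/2\in\{0,1\}$; for a pair $\{g,-g\}$ of order-$4$ elements, $f(-g)\equiv -f(g)\pmod 4$, and a four-case verification confirms $(\varphi^+(a),\varphi^+(-a))=\varphi(-a)$ for every $a\in\Z_4$. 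Thus, if the coordinates are arranged so that the order-$\leq 2$ elements come first and the opposite pairs are listed consecutively, the word $\varphi^+\circ f$ coincides with $\Phi(c_f)$, where $c_f\in\Z_2^{\alpha}\times\Z_4^{\beta}$ carries $f(g)/2$ in each binary slot and $-f(g)$ in each quaternary slot. The map $f\mapsto c_f$ is an injective group homomorphism, since the coordinate-level maps $x\mapsto x/2$ on $\{0,2\}$ and $x\mapsto -x$ on $\Z_4$ are both additive and $c_f$ clearly recovers $f$; therefore $\mathcal{C}:=\{c_f:f\in\mathcal{A}\}$ is a $\Z_2\Z_4$-additive code of type $(\alpha,\beta;\gamma,\delta)$ and $C_{\dot\gamma,\delta}=\Phi(\mathcal{C})$.

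Finally I would check the Hadamard distance by computing weights. Write $f=c+L$ with $c\in\{0,2\}$ and $L$ a homomorphism to $\Z_4$. If $L=0$ the two possibilities $c=0$ and $c=2$ give the all-zero and all-$2$ functions, of weights $0$ and $n$ respectively under $\varphi^+$. If $L\ne 0$ has image inside $\{0,2\}$, it is a nonzero $\Z_2$-valued homomorphism whose two fibres have size $n/2$, and a short count shows $\mathrm{wt}(\varphi^+\circ f)=n/2$ for either choice of $c$. If instead $L$ is surjective onto $\Z_4$, each of its four fibres has size $n/4$, so exactly two of them contribute to $\{g:f(g)\in\{1,2\}\}$, again giving weight $n/2$. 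Because $\Phi$ is an isometry from the Lee to the Hamming metric and $\mathcal{C}$ is a group, Hamming distances between distinct codewords of $C_{\dot\gamma,\delta}$ equal Hamming weights of other codewords; hence the minimum distance is $n/2$, and together with $|C_{\dot\gamma,\delta}|=2n$ this yields the Hadamard property.

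The step I expect to be trickiest is establishing the pair identity $(\varphi^+(a),\varphi^+(-a))=\varphi(-a)$, since this is what upgrades the naive $\varphi^+$-image into a genuine $\Z_2\Z_4$-linear code rather than a Gray-like relabeling; once it is in hand, the type computation and the three-way weight case analysis are both routine.
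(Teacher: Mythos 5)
Your proof is correct, and it follows a somewhat different route from the paper's. Both arguments rest on the same coordinate decomposition (elements of order at most $2$ give the $\alpha$ binary coordinates, opposite pairs of order-$4$ elements give the $\beta$ quaternary ones), but you make the $\Z_2\Z_4$-additive preimage $\mathcal{C}\subseteq\Z_2^{\alpha}\times\Z_4^{\beta}$ fully explicit via the map $f\mapsto c_f$ and the identity $(\varphi^+(a),\varphi^+(-a))=\varphi(-a)$, then verify the type by exhibiting a group isomorphism $\mathcal{A}\cong\Z_2^{\gamma}\times\Z_4^{\delta}$ and compute the minimum distance by a three-case weight enumeration over $f=c+L$ according to the image of $L$ in $\Z_4$. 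The paper instead works with the length-$2n$ Gray image $A=\Phi(\mathcal{A})$, observes that $A$ is permutation equivalent to $\{(h,h)\,:\,h\in C_{\dot\gamma,\delta}\}$ (each coordinate of $C_{\dot\gamma,\delta}$ doubled), and deduces the minimum Hamming distance of $C_{\dot\gamma,\delta}$ as half the minimum Lee distance of $\mathcal{A}$, which it leaves as ``easy to check.'' Your version buys a self-contained verification of the group type $(\alpha,\beta;\gamma,\delta)$ and of the weight distribution, both of which the paper only asserts; the paper's doubling trick is shorter and reappears in the proof of Theorem~\ref{th:equiv}, so it earns its keep there. Your reduction of minimum distance to minimum weight via the Lee--Hamming isometry and the group structure of $\mathcal{C}$ is the standard and correct way to handle the nonlinearity of $C_{\dot\gamma,\delta}$ as a binary code.
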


\begin{IEEEproof}
There are $2\cdot 2^{\doT\gamma} \cdot 4^\delta=2n$ affine
functions in $\mathcal{A}$. The set $\mathcal{A}$ is closed under the addition over $\Z_4$;
so the Gray map image $A=\Phi(\mathcal{A})$ can also be considered as
a $\Z_2\Z_4$-linear code with $2^{{\doT\gamma}+\delta+1}$ coordinates over $\Z_2$, which correspond
to the elements of order at most $2$ of $\Z_2^{\doT\gamma} \times \Z_4^\delta$.

Now, we will see that the code $A$ can be obtained from $C_{{\doT\gamma},\delta}$ by repeating twice every coordinate.
That is, strictly speaking, $A$ is permutation equivalent to $\{(h,h) \,|\, h\in C_{{\doT\gamma},\delta}\}$.
Indeed, given $v\in \Z_2^{\doT\gamma} \times \Z_4^\delta$ of order $4$
and an affine function $f\in \mathcal{A}$, the values $\varphi^+(f(v))$ and $\varphi^+(f(-v))$
of the corresponding codeword of $C_{{\doT\gamma},\delta}$ each occurs both in $\varphi(f(v))$ and $\varphi(f(-v))$.
If the order of $v\in \Z_2^{\doT\gamma} \times \Z_4^\delta$ is $2$ or less, then $\varphi^+(f(v))$ is duplicated in $\varphi(f(v))$.

Finally, it is easy to check that the minimum Lee distance for the set of affine functions $\mathcal{A}$ is $n=2^{\doT\gamma+2\delta}$;
so the minimum Hamming distance of $C_{{\doT\gamma},\delta}$ is the half of this value, $n/2$.
\end{IEEEproof}

\begin{lemma}\it\label{l:13}
Let
$f:\Z_2^{\doT\gamma} \times \Z_4^\delta\to \Z_4$
be an affine function.
Then $h(\cdot) = \varphi^+(f(\cdot))$
belongs to $C_{{\doT\gamma},\delta}$.
\end{lemma}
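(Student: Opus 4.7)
\medskip
\noindent
\textbf{Plan for the proof of Lemma~\ref{l:13}.}
The only way in which the hypothesis here is weaker than the definition of $C_{\dot\gamma,\delta}$ is that we no longer require $f(\overline 0)\in\{0,2\}$. So the plan is to split on the value of $f(\overline 0)\in\Z_4$: if $f(\overline 0)\in\{0,2\}$ then $f\in\mathcal{A}$ and the conclusion is immediate from the definition of $C_{\dot\gamma,\delta}$; otherwise $f(\overline 0)\in\{1,3\}$ and I need to exhibit some other affine function $g\in\mathcal{A}$ such that $\varphi^+(f(\cdot))=\varphi^+(g(\cdot))$ pointwise.

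The candidate I would use is $g(x):=-1-f(x)$. Two things need to be checked about $g$. First, that $g$ is affine: since affine functions are precisely functions of the form (constant)$+$(homomorphism), and $-f$ is again of this form, so is the translate $-1-f$; alternatively one checks directly that $g(\overline 0)-g(x)-g(y)+g(x+y)=-(f(\overline 0)-f(x)-f(y)+f(x+y))=\overline 0$. Second, that $g(\overline 0)\in\{0,2\}$: from $f(\overline 0)=1$ we get $g(\overline 0)=-2=2$, and from $f(\overline 0)=3$ we get $g(\overline 0)=-4=0$. Hence $g\in\mathcal{A}$.

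What remains, and is really the only computational step, is the pointwise identity
\[
\varphi^+(-1-y)=\varphi^+(y) \quad \text{for all } y\in\Z_4,
\]
which I would verify by a four-entry table using the definition $\varphi^+(0)=\varphi^+(3)=0$, $\varphi^+(1)=\varphi^+(2)=1$: the pairs $(y,-1-y)$ traverse $(0,3),(1,2),(2,1),(3,0)$, and on each pair $\varphi^+$ takes the same value. Once this identity is in hand, $\varphi^+(f(x))=\varphi^+(g(x))$ for every $x$, so $h=\varphi^+\circ f=\varphi^+\circ g\in C_{\dot\gamma,\delta}$ by Lemma~\ref{l:constr}/the definition.

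There is no real obstacle here; the argument is essentially a symmetry of $\varphi^+$ (namely invariance under $y\mapsto -1-y$) combined with the observation that this symmetry can be realized on the level of affine functions by $f\mapsto -1-f$, which preserves affineness and toggles the coset $\{0,2\}$ versus $\{1,3\}$ of admissible values of $f(\overline 0)$. If I wanted a slicker one-line phrasing, I would simply say: the map $f\mapsto -1-f$ is an involution on the set of all affine $\Z_4$-valued functions that sends $\{f:f(\overline 0)\in\{1,3\}\}$ into $\mathcal{A}$ and commutes with $\varphi^+$ in the sense of the displayed identity, so composing with $\varphi^+$ collapses all four cosets of $f(\overline 0)$ onto the two allowed ones.
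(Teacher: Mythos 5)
Your proof is correct and is essentially identical to the paper's: the paper also splits on $f(\overline 0)$ and, in the case $f(\overline 0)\in\{1,3\}$, replaces $f$ by $3-f$ (which is your $-1-f$ written without the minus sign), using the same identity $\varphi^+(l)=\varphi^+(3-l)$. No differences worth noting.
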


\begin{IEEEproof}
 In the case that $f(\overline{0})\in \{0,2\}$,
 $C_{{\doT\gamma},\delta}$ contains $h$ by definition.
 On the other hand,
 if $f(\overline{0}) \in \{1,3\}$,
 we will use that
 $\varphi^+(l)=\varphi^+(3-l)$
    for $l\in \Z_4$.
 Then,
 $h(\cdot)=\varphi^+(f(\cdot))=\varphi^+(3-f(\cdot))$.
 Since $3-f(\cdot)$
 is an affine function
 and $3-f(\overline{0})\in \{0,2\}$,
 we obtain that
 $h\in C_{{\doT\gamma},\delta}$.
\end{IEEEproof}

\begin{theorem}\it\label{th:equiv}
The $\Z_4$-linear Hadamard code
$B_{{\gamma},\doT\delta}$ of length $n$
and type $(0,n/2;\gamma,\delta)$
is permutation equivalent
 to the $\Z_2\Z_4$-linear Hadamard code
 $C_{\gamma+1,\doT\delta}$
 of type $(\alpha,(n-\alpha)/2;\gamma+2,\delta-1)$
 with $\alpha=n/2^{\delta-1} \not =0$.
\end{theorem}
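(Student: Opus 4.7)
The plan is to exhibit an explicit bijection $\pi$ between the position sets of $B_{\gamma,\dot\delta}$ and $C_{\gamma+1,\dot\delta}$ and to check that it realises the desired permutation equivalence. Indexing the positions of $B_{\gamma,\dot\delta}$ by $V\times\{0,1\}$, where $V=\Z_2^\gamma\times\Z_4^{\dot\delta}$ and the $\{0,1\}$-factor records the bit of the Gray map, and the positions of $C_{\gamma+1,\dot\delta}$ by $V'=\Z_2\times V=\Z_2^{\gamma+1}\times\Z_4^{\dot\delta}$, both sets have size $n=2^{\gamma+2\dot\delta+1}$. I want to show, for each affine $f:V\to\Z_4$, that the codeword $(v,i)\mapsto\varphi(f(v))_i$ of $B_{\gamma,\dot\delta}$ becomes, after reindexing by $\pi$, a codeword $(s,w)\mapsto\varphi^+(g(s,w))$ of $C_{\gamma+1,\dot\delta}$ for some affine $g:V'\to\Z_4$. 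By Lemma~\ref{l:13} the condition $g(\overline{0})\in\{0,2\}$ can be dropped, so every affine $g$ is admissible. Once $\pi(B_{\gamma,\dot\delta})\subseteq C_{\gamma+1,\dot\delta}$ is established, the equality of cardinalities $|B_{\gamma,\dot\delta}|=|C_{\gamma+1,\dot\delta}|=2n$ forces equality and yields the equivalence.

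The key starting point is the identity $\varphi(l)=(\varphi^+(l-1),\varphi^+(l))$, which lets us rewrite the $B$-codeword uniformly as $(v,i)\mapsto\varphi^+(f(v)+(i-1))$; this makes it already structurally similar to a $C$-codeword $\varphi^+\circ g$, and the task reduces to realising the $\Z_4$-shift $+(i-1)$ as an affine contribution of the extra $\Z_2$-coordinate of $V'$. The naive choice $\pi(v,i)=(i,v)$ fails: it would force $g(s,v)=f(v)+s-1$, which is not affine, because the inclusion $\Z_2\hookrightarrow\Z_4$ by $s\mapsto s$ is not a group homomorphism ($1+1=0$ in $\Z_2$ while $1+1=2$ in $\Z_4$). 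Thus $\pi$ must genuinely mix the bit index $i$ with some portion of $v$, absorbing this non-homomorphism into a re-shuffling of coordinates; in the small case $\gamma=0$, $\dot\delta=1$ this is achieved by the assignment $\pi(v,i)=(\varphi(v)_i,\,2v_0+i)$ with $v=2v_1+v_0$, as one can verify by hand.

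The heart of the proof is the construction of the correct bijection $\pi$ for general $\gamma$ and $\dot\delta$ together with the verification that the corresponding $g$ is always affine. This verification rests on the Gray-map identities $\varphi^-(l)=\varphi^+(l-1)$ and $\varphi^+(l+2)=1\oplus\varphi^+(l)$, which allow the non-homomorphism of $\Z_2\hookrightarrow\Z_4$ to be balanced against a suitable $\Z_4$-shift of a distinguished quaternary coordinate of $V'$. The main obstacle is that this twist cannot simply be applied to one coordinate of $V$ in isolation when $\dot\delta\ge 1$ is larger; the other coordinates of $V'$ must be adjusted coherently so that the resulting $g$ stays affine uniformly in $f$. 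Once $\pi$ and the assignment $f\mapsto g$ are spelt out and the affinity check performed, the cardinality argument closes the proof.
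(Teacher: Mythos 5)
Your high-level plan coincides with the paper's: index the coordinates of $B_{\gamma,\dot\delta}$ by $V\times\{0,1\}$ with $V=\Z_2^{\gamma}\times\Z_4^{\dot\delta}$ and those of $C_{\gamma+1,\dot\delta}$ by $\Z_2\times V$, exploit the identity $\varphi(l)=(\varphi^+(l-1),\varphi^+(l))$, invoke Lemma~\ref{l:13} so that affine $g$ with $g(\overline{0})\notin\{0,2\}$ are still admissible, and finish by comparing cardinalities. Those outer steps are all sound. But the one step that actually proves the theorem --- exhibiting the fixed permutation $\pi$ and the affine $g$ attached to each $f$, and verifying the reindexing identity --- is exactly the step you defer: you call it ``the heart of the proof'' and ``the main obstacle,'' give a hand-checked instance for $\gamma=0$, $\dot\delta=1$, and then assert that once it is ``spelt out'' the rest follows. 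As written, that is a genuine gap, not a proof: the claim that the coordinates of $V'$ ``can be adjusted coherently so that the resulting $g$ stays affine uniformly in $f$'' is precisely the content of the theorem and is never established for general $\gamma$, $\dot\delta$.

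The missing construction is simpler than your discussion suggests, and it does not require the elaborate mixing you anticipate. Write $f=f_0+f(\overline{0})$ with $f_0$ a group homomorphism, and set $g(v,e)=f(v)+2ef(\overline{0})$ for $(v,e)\in V\times\Z_2$. This $g$ is affine because $e\mapsto 2e$ \emph{is} a homomorphism from $\Z_2$ to $\Z_4$ --- so the non-homomorphy of $s\mapsto s$ that blocks your ``naive choice'' is bypassed rather than absorbed into a reshuffling. Then $g(v,0)=f(v)$ and $g(-v,1)=-f_0(v)+3f(\overline{0})=-f(v)$, and since $\varphi^+(l-1)=\varphi^+(-l)$, one gets $\varphi(f(v))=\bigl(\varphi^+(g(-v,1)),\varphi^+(g(v,0))\bigr)$. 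Hence the permutation sends the first Gray bit of the position $v$ to the coordinate $(-v,1)$ and the second to $(v,0)$: the extra $\Z_2$-coordinate of $V'$ is exactly your bit index $i$, and the only ``mixing'' needed is the negation $v\mapsto -v$ on half the coordinates, which is a fixed bijection of the index set, uniform in $f$ and valid for all $\gamma$ and $\dot\delta$. With this in place, your appeal to Lemma~\ref{l:13} and the cardinality argument close the proof exactly as you intended.
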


\begin{IEEEproof}
Consider a function $f$ in $\mathcal{B}$ and the related function $g(v,e)= f(v) + 2ef(\overline{0})$,
where $v\in \Z_2^{\gamma} \times \Z_4^{\doT\delta}$ and $e\in \Z_2$.
We can see that
\begin{IEEEeqnarray*}{rCl}
 \varphi(f(v)) &=& \left(\varphi^+(g(-v,1)),\varphi^+(g(v,0))\right),\\
 \varphi(f(-v)) &=& \left(\varphi^+(g(v,1)),\varphi^+(g(-v,0))\right).
\end{IEEEeqnarray*}
In order to check these equalities, it is convenient to represent $f(v)$ as $f_0(v) + f(\overline{0})$, where
$f_0 : \Z_2^{\gamma} \times \Z_4^{\doT\delta}\to \Z_4$ is a group homomorphism (in particular, $f_0(-v)=-f_0(v)$).

Since $g$ is an affine function from $v\in \Z_2^{\gamma+1} \times \Z_4^{\doT\delta}$ to $\Z_4$,
we can deduce from Lemma~\ref{l:13} that there is a fixed coordinate permutation that sends
every codeword of $B_{{\gamma},\doT\delta}$ to a codeword of $C_{{\gamma+1},\doT\delta}$.
\end{IEEEproof}

\begin{corollary}\it
There are exactly $\lfloor\frac{t}{2} \rfloor$ nonequivalent $\Z_2\Z_4$-linear Hadamard codes of length $2^t$.
\end{corollary}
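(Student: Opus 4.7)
The plan is to combine Theorem~\ref{th:equiv} with the two existing classifications cited in the introduction, namely the $\lfloor(t-1)/2\rfloor$ nonequivalent $\Z_4$-linear Hadamard codes of length $2^t$ (case $\alpha=0$) from \cite{Kro:2001:Z4_Had_Perf} and the $\lfloor t/2\rfloor$ nonequivalent $\Z_2\Z_4$-linear Hadamard codes of length $2^t$ with $\alpha\ne 0$ from \cite{BorPheRif:2003}. A $\Z_2\Z_4$-linear Hadamard code of length $2^t$ has either $\alpha=0$ or $\alpha\ne 0$, so the number of equivalence classes is at most $\lfloor(t-1)/2\rfloor+\lfloor t/2\rfloor$ and at least $\lfloor t/2\rfloor$.

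First I would note that by the classification for $\alpha\ne 0$, the $\lfloor t/2\rfloor$ codes with $\alpha\ne 0$ are pairwise nonequivalent, which immediately gives the lower bound $\lfloor t/2\rfloor$. Next I would use Theorem~\ref{th:equiv} to eliminate the $\alpha=0$ classes: every $\Z_4$-linear Hadamard code of length $2^t$ is of the form $B_{\gamma,\dot\delta}$ for some admissible parameters (since for each such type there is, up to equivalence, a unique code), and Theorem~\ref{th:equiv} states that $B_{\gamma,\dot\delta}$ is permutation equivalent to $C_{\gamma+1,\dot\delta}$, which is a $\Z_2\Z_4$-linear Hadamard code with $\alpha=n/2^{\delta-1}\ne 0$. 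Hence every equivalence class containing an $\alpha=0$ representative already contains an $\alpha\ne 0$ representative, so no new classes arise from the $\alpha=0$ codes.

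Combining these two observations, the total number of equivalence classes equals the number of nonequivalent $\Z_2\Z_4$-linear Hadamard codes with $\alpha\ne 0$, which is $\lfloor t/2\rfloor$. The only subtlety to verify is that the lengths and admissible parameter ranges match: if $B_{\gamma,\dot\delta}$ has length $n=2^t$, then $t=\gamma+2\dot\delta+1$, and $C_{\gamma+1,\dot\delta}$ has length $2^{(\gamma+1)+2\dot\delta}=2^t$ as well, so Theorem~\ref{th:equiv} indeed stays within length $2^t$. This is essentially bookkeeping, so there is no real obstacle; the content of the corollary is entirely carried by Theorem~\ref{th:equiv} together with the two prior enumerations.
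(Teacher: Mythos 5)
Your proof is correct and follows exactly the route the paper intends: the corollary is stated without proof precisely because it is the immediate combination of Theorem~\ref{th:equiv} with the prior enumerations from \cite{Kro:2001:Z4_Had_Perf} and \cite{BorPheRif:2003}, which is what you carry out. The parameter/length bookkeeping you verify at the end is the only point requiring care, and you handle it correctly.
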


\section[The monomial automorphism group of the Z2Z4-additive Hadamard codes, alpha=0]{The monomial automorphism group of the $\Z_2\Z_4$-additive Hadamard codes, $\alpha =0$}\label{s:mautZ4}
\setstretch{1.10}
In this and next sections, we establish
the structure and the order of
the monomial automorphism group for
the $\Z_2\Z_4$-additive Hadamard codes.
We start with the $\Z_2\Z_4$-additive Hadamard codes
with $\alpha = 0$.

Recall 
that the $\Z_2\Z_4$-additive Hadamard code
of type $(0,2^{\gamma+2\doT\delta};\gamma,\doT\delta+1)$ can be considered as
the set $\mathcal{B}=\mathcal{B}_{{\gamma},\doT\delta}$ of all affine functions from $\Z_2^{\gamma} \times \Z_4^{\doT\delta}$ to $\Z_4$.
As the elements of $\Z_2^{\gamma} \times \Z_4^{\doT\delta}$
play the role of coordinates,
the coordinate permutations are the permutations of
$\Z_2^{\gamma} \times \Z_4^{\doT\delta}$,
and the action of such permutation $\sigma$
on a function $f$ on $\Z_2^{\gamma} \times \Z_4^{\doT\delta}$
can be expressed as
$\sigma f(v) = f(\sigma^{-1}(v))$.
The action of a sign change $\tau$ can be expressed as
$\tau f(v) = f(v)r(v)$,
where $r$ is the corresponding function
from $\Z_2^{\gamma} \times \Z_4^{\doT\delta}$ to $\{1,3\}$.

Assume we have an affine function $r$ from $\Z_2^{\gamma} \times \Z_4^{\doT\delta}$ to $\{1,3\}\subset\Z_4$;
note that $r(v)=r(-v)$ for all $v$.
Let $\mathrm{supp}(r)$ be the set of elements of $\Z_2^{\gamma} \times \Z_4^{\doT\delta}$
whose image by $r$ is 3.
By $\tau_r$, we denote
the transformation
that changes the sign of the value of a function $f\in \mathcal{B}_{{\gamma},\doT\delta}$
at the coordinates given by the elements of $\mathrm{supp}(r)$: $\tau_r f(v) = f(v)r(v)$.
By $\rho_r$, we denote the permutation
of $\Z_2^{\gamma} \times \Z_4^{\doT\delta}$
that changes the sign of the elements of $\mathrm{supp}(r)$:
$\rho_r f(v) = f(r(v)v)$.

\begin{theorem}\it\label{th:MAutZ4}
The monomial automorphism group $\mathrm{MAut}(\mathcal{B}_{{\gamma},\doT\delta})$
of the $\Z_2\Z_4$-additive Hadamard code $\mathcal{B}_{{\gamma},\doT\delta}$ of type
$(0,2^{\gamma+2\doT\delta};\gamma,\doT\delta+1)$
consists of all transformations $\tau_r  \rho_r  \sigma$
where $\sigma$ is an affine permutation of $\Z_2^{\gamma} \times \Z_4^{\doT\delta}$
and $r$ is an affine function from $\Z_2^{\gamma} \times \Z_4^{\doT\delta}$ to $\{1,3\}$
($\tau_r$ and $\rho_r$ are the sign change
and the permutation of $\Z_2^{\gamma} \times \Z_4^{\doT\delta}$
assigned to $r$ as above).
The cardinality of this group satisfies
\begin{multline*}
|\mathrm{MAut}(\mathcal{B}_{{\gamma},\doT\delta})| \\=
2^{\frac{1}{2}{\gamma}^2+\frac{3}{2}{\gamma} +2{\gamma}\doT\delta + \frac{3}{2}\doT\delta^2 + \frac52\doT\delta+1}\prod_{i=1}^{{\gamma}}(2^i-1)\prod_{j=1}^{\doT\delta}(2^j-1).
\end{multline*}
\end{theorem}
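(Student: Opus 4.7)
My plan is to establish the equality in the theorem by two inclusions and then to count via uniqueness of the decomposition. Writing $G := \Z_2^\gamma \times \Z_4^{\dot\delta}$, the inclusion ``$\supseteq$'' is a direct verification, while ``$\subseteq$'' reduces to the classification of those coordinate permutations of $G$ that preserve the set of affine functions $G \to \Z_4$; this classification is the step where I expect to work hardest.

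For ``$\supseteq$'', given an affine $f:G\to\Z_4$, I would compute
$$(\tau_r\rho_r\sigma f)(v) \;=\; r(v)\,f\bigl(\sigma^{-1}(r(v)\,v)\bigr).$$
Setting $g := f\circ\sigma^{-1}$ (which is affine, since $\sigma$ is affine) and writing $g(w)=g_0(w)+g(\overline 0)$ with $g_0$ a homomorphism, the identities $r(v)^2=1$ and $g_0(r(v)v)=r(v)g_0(v)$ in $\Z_4$ (valid because $r(v)\in\{1,-1\}$) collapse the expression to $g_0(v)+r(v)\,g(\overline 0)$, which is affine in $v$ because $r$ is.

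For ``$\subseteq$'', I would write a given monomial automorphism as $\mu=\tau_\epsilon\pi$, with sign-change function $\epsilon:G\to\{1,3\}$ and coordinate permutation $\pi$. Applying $\mu$ to the constant codeword $\overline 1\in\mathcal{B}_{\gamma,\dot\delta}$ yields $\epsilon$, forcing $\epsilon=r$ to be affine. Setting $\sigma:=\rho_r\pi$ gives $\mu=\tau_r\rho_r\sigma$ (using $\rho_r^2=\mathrm{id}$), and since both $\mu$ and $\tau_r\rho_r$ lie in $\mathrm{MAut}(\mathcal{B}_{\gamma,\dot\delta})$, so does $\sigma$; thus $\sigma$ is a pure coordinate permutation of $G$ preserving $\mathcal{B}_{\gamma,\dot\delta}$. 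It remains to show $\sigma$ is affine. With $\pi':=\sigma^{-1}$, for every homomorphism $f:G\to\Z_4$ the function $f\circ\pi'$ is affine, which (since $f$ is itself a homomorphism) translates to $f\bigl(\pi'(v+w)\bigr)=f\bigl(\pi'(v)+\pi'(w)-\pi'(\overline 0)\bigr)$ for all $v,w\in G$. The crucial observation is that $\mathrm{Hom}(G,\Z_4)$ separates points of $G$: for any nonzero $z\in G$ some $\Z_2$- or $\Z_4$-coordinate projection sends $z$ to a nonzero value of $\Z_4$. Hence $\pi'(v+w)=\pi'(v)+\pi'(w)-\pi'(\overline 0)$, so $\pi'$ (and thus $\sigma$) is affine.

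Finally, the decomposition $\mu=\tau_r\rho_r\sigma$ is unique (the sign-change factor of $\mu$ recovers $r$, and $\sigma$ is then forced), so $|\mathrm{MAut}(\mathcal{B}_{\gamma,\dot\delta})|$ equals the number of affine maps $r:G\to\{1,3\}$ times the number of affine permutations of $G$. The first count is $2\cdot|\mathrm{Hom}(G,\Z_2)|=2^{\gamma+\dot\delta+1}$; the second is $|G|\cdot|\mathrm{Aut}(G)|=2^{\gamma+2\dot\delta}\cdot|\mathrm{Aut}(G)|$. To compute $|\mathrm{Aut}(G)|$ I would stratify by reduction modulo $2G$: the induced $(\gamma+\dot\delta)\times(\gamma+\dot\delta)$ matrix over $\Z_2$ is block upper triangular, with an invertible $\gamma\times\gamma$ block contributing $|\mathrm{GL}_\gamma(\Z_2)|$, an invertible $\dot\delta\times\dot\delta$ block contributing $|\mathrm{GL}_{\dot\delta}(\Z_2)|$, and an arbitrary $\gamma\times\dot\delta$ block contributing $2^{\gamma\dot\delta}$; each of the $\gamma+\dot\delta$ images lifts in $2^{\dot\delta}$ ways from its mod-$2G$ representative into $G[2]$ or $G$ respectively. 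Multiplying these contributions and simplifying the power of $2$ yields the exponent and the two products $\prod_{i=1}^\gamma (2^i-1)$, $\prod_{j=1}^{\dot\delta}(2^j-1)$ displayed in the theorem.
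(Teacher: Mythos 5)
Your proposal is correct and follows essentially the same route as the paper: both directions rest on the same three observations (direct verification that $\tau_r\rho_r\sigma$ preserves affineness, recovery of the affine sign function $r$ by applying the automorphism to the constant codeword $\overline 1$, and affineness of the residual coordinate permutation via the coordinate homomorphisms, which you phrase as separation of points by $\mathrm{Hom}(\Z_2^{\gamma}\times\Z_4^{\dot\delta},\Z_4)$). The only divergence is in evaluating $|\mathrm{Aut}(\Z_2^{\gamma}\times\Z_4^{\dot\delta})|$ --- you stratify by reduction modulo $2G$ and a block-triangular matrix over $\Z_2$, whereas the paper chooses generator images sequentially --- but both computations yield the same count, and your explicit uniqueness-of-decomposition remark makes precise a point the paper leaves implicit.
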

\begin{IEEEproof}
The statement of the theorem is straightforward from the following three claims.

(i) \emph{Every transformation from the statement of the theorem belongs to $\mathrm{MAut}(\mathcal{B})$}.
Obviously, $\sigma$ sends $\mathcal{B}$ to $\mathcal{B}$, as the composition of an affine permutation and an affine function is an affine function, too. It remains to check that $\tau_r  \rho_r \in \mathrm{MAut}(\mathcal{B})$.
Consider an affine function $g(\cdot) = l(\cdot)+c(\cdot)$, where $l$ is a linear function and $c$ is a constant function,   $c(\cdot)\equiv a$. Since $l(-v)=-l(v)$ and $r(v)=r(-v)$, we have $\tau_r ( \rho_r (l))= l$.
Further, $\rho_r c (\cdot)= c(\cdot)$ and $\tau_r c(\cdot)= a r(\cdot)$.
Therefore, we see that  $\tau_r  \rho_r (g)$ is an affine function and claim (i) is proved.

(ii) \emph{If $\tau  \psi\in \mathrm{MAut}(\mathcal{B})$, where $\psi$ is a permutation of $\Z_2^{\gamma} \times \Z_4^{\doT\delta}$ and $\tau$ is a sign change, then $\tau = \tau_r$ for some affine function $r$}.
Indeed, consider the constant function $g\equiv 1$, which obviously belongs to $\mathcal{B}$.
Then $\psi(g) = g$. Denote $r(\cdot) = \tau g (\cdot)$.
Since $\tau(g)=\tau(\psi(g))$ is affine, we find that $r$ is also affine. It is straightforward that $\tau = \tau_r$.

(iii)
\emph{Every permutation $\sigma$ of
$\Z_2^{\gamma} \times \Z_4^{\doT\delta}$
from $\mathrm{MAut}(\mathcal{B})$ is affine}.
Let  $\sigma'_1$, \ldots, $\sigma'_\gamma:\Z_2^{\gamma} \times \Z_4^{\doT\delta} \to \Z_2$
and $\sigma''_1$, \ldots, $\sigma''_{\doT\delta}:\Z_2^{\gamma} \times \Z_4^{\doT\delta} \to \Z_4$
be the components of the vector-function $\sigma^{-1}$. That is,
$$\sigma^{-1}(v) = \big(\sigma'_1(v),\ldots,\sigma'_{\gamma}(v),\sigma''_1(v),\ldots,\sigma''_{\doT\delta}(v)\big).$$
Now, consider the function $f_i:\Z_2^{\gamma} \times \Z_4^{\doT\delta} \to \Z_4$ defined as
$f_i(x_1,\ldots,x_{\gamma},y_1,\ldots, y_{\doT\delta})=2x_i$, $i\in\{1,\ldots,\gamma\}$.
This function is affine and,
as $\sigma \in \mathrm{MAut}(\mathcal{B})$,
the function $\sigma f_i$ is also affine.
We have $\sigma f_i (v) = f_i(\sigma^{-1}(v)) = 2\sigma'_i(v)$,
which means that $\sigma'_i(v)$ is an affine $\Z_2$-valued function.
Similarly, considering
$g_j(x_1,\ldots,x_{\gamma},y_1,\ldots,y_{\doT\delta})=y_j$,
we see that $\sigma''_j(v)$ is affine, $j\in\{1,\ldots,\doT\delta\}$.
Since all the components are affine,
we conclude that $\sigma^{-1}$ and $\sigma$ are affine.

Now, assume we have a transformation
$\xi = \tau  \psi$ from $\mathrm{MAut}(\mathcal{B})$,
where $\psi$ is a permutation of
$\Z_2^{\gamma} \times \Z_4^{\doT\delta}$
and $\tau$ is a sign change.
By claim (ii), we have $\tau  = \tau_r$ for some affine $\{1,3\}$-valued function $r$.
By claim (i), the permutation $\sigma = \rho_r^{-1} \psi = \rho_r^{-1} \tau_r^{-1} \xi$
belongs to $\mathrm{MAut}(\mathcal{B})$,
and by claim (iii) it is affine. Then, $\xi = \tau_r \rho_r \sigma$.

Let us calculate the cardinality of the group.
An affine permutation is uniquely represented
as the composition of a linear permutation and a translation.
There are
$T=|\Z_2^{\gamma} \times \Z_4^{\doT\delta}|
=2^{\gamma+2\doT\delta}$
different translations.
To choose a linear permutation of
$\Z_2^{\gamma} \times \Z_4^{\doT\delta}$,
one should choose the images
$y_1$, \ldots, $y_{\doT\delta}$, $x_1$, \ldots, $x_{\gamma}$
of the elements $(10...0)$, $(010...0)$, \ldots,
$(0...01)$, respectively, i.e.,
 ${\doT\delta}$ elements of order $4$
and ${\gamma}$ elements of order $2$, spanning
$\Z_2^{\gamma} \times \Z_4^{\doT\delta}$.
The first element $y_1$ can be chosen in
$2^\gamma 4^{\doT\delta} - 2^\gamma 2^{\doT\delta}$ ways.
To choose $y_j$,
one can choose an element of order $4$
in the factor-group
$\Z_2^{\gamma} \times \Z_4^{\doT\delta}
/ \langle y_1,\ldots, y_{j-1} \rangle$
(in $2^\gamma 4^{\doT\delta-j+1} - 2^\gamma 2^{\doT\delta-j+1}$ ways)
and a representative in the chosen coset of
$\langle y_1,\ldots, y_{j-1} \rangle$
(in $4^{j-1}$ ways). Totally, there are
\begin{IEEEeqnarray*}{rCl}
 Y&=&\prod_{j=1}^{\doT\delta}
\big(4^{\doT\delta}2^{\gamma}-
2^{\doT\delta}2^{\gamma}2^{j-1}\big)\\[-1mm]
&=&
2^{\doT\delta^2+\gamma\doT\delta+\frac12\doT\delta(\doT\delta-1)}
\prod_{j=1}^{\doT\delta}(2^j-1)
\end{IEEEeqnarray*}
ways to choose $y_1$, \ldots, $y_{\doT\delta}$.
After choosing $y_1$, \ldots, $y_{\doT\delta}$,
$x_1$, \ldots, $x_{i-1}$,
the element $x_i$
can be chosen arbitrarily from
the elements of order $2$ of
$\Z_2^{\gamma} \times \Z_4^{\doT\delta}
\setminus \langle y_1,\ldots, y_{\doT\delta},
x_1, \ldots, x_{i-1}\rangle$, in
$2^{\doT\delta}2^{\gamma}-2^{\doT\delta}2^{i-1}$
ways. Totally, there are
\begin{IEEEeqnarray*}{rCl}
 X&=&\prod_{i=1}^{\gamma}
\big(2^{\doT\delta}2^{\gamma}-
2^{\doT\delta}2^{i-1}\big)
\\[-1mm]&=&
2^{\doT\delta\gamma+\frac12\gamma(\gamma-1)}
\prod_{i=1}^{\gamma}(2^i-1)\ \ \ 
\end{IEEEeqnarray*}
ways to choose $x_1$, \ldots, $x_{\gamma}$ after $y_1$, \ldots, $y_{\doT\delta}$. An affine function $r$
is the sum of the coordinates with coefficients from $\{0,2\}$ and a constant from $\{1,3\}$;
so there are $F=2^{\gamma+\doT\delta+1}$ such functions.
Finally, $|\mathrm{MAut}(\mathcal{B})|=TYXF$.
\end{IEEEproof}
\pagebreak

\section[The monomial automorphism group of the Z2Z4-additive Hadamard codes, alpha>0]{The monomial automorphism group of the $\Z_2\Z_4$-additive Hadamard codes, $\alpha\ne 0$}\label{s:mautZ2Z4}
\setstretch{1.05}
Recall that the $\Z_2\Z_4$-linear Hadamard code $C_{{\doT\gamma},\delta}$ of type
$(2^{{\doT\gamma}+\delta},\beta; \doT\gamma+1,\delta)$
can be considered as the set of functions
$\varphi^+(f(\cdot))$ where $f$ belongs to the set $\mathcal{A}$ of affine functions
from $\Z_2^{\doT\gamma} \times \Z_4^\delta$ to $\Z_4$ that map the all-zero vector to $0$ or $2$.
Equivalently,
\begin{multline}\label{eq:defA}
\mathcal{A}= \big\{f:\Z_2^{\doT\gamma} \times \Z_4^\delta\to \Z_4 \ \big|\  f\mbox{ is affine and}\\ 
             f(-x)=-f(x) \ \ \forall x \big\}.
\end{multline}

A pair of opposite elements of $\Z_2^{\doT\gamma} \times \Z_4^\delta$, considered
as a pair of coordinates of the code $C_{{\doT\gamma},\delta}$, corresponds
to one $\Z_4$ coordinate of the corresponding
$\Z_2\Z_4$-additive Hadamard code $\mathcal{C}_{{\doT\gamma},\delta}=\Phi^{-1}(C_{{\doT\gamma},\delta})$.
A self-opposite element (i.e., an element of order $1$ or $2$) of $\Z_2^{\doT\gamma} \times \Z_4^\delta$
corresponds to a $\Z_2$ coordinate of $\mathcal{C}_{{\doT\gamma},\delta}$.
Hence, every monomial automorphism of $\mathcal{C}_{{\doT\gamma},\delta}$
corresponds to a permutation automorphism $\sigma$ of $C_{{\doT\gamma},\delta}$ that
preserves the negation, i.e., $\sigma(-x)=-\sigma(x)$.
Denote the set of all such automorphisms by
$\mathrm{Aut^-}(C_{{\doT\gamma},\delta})$; so,
$\mathrm{Aut^-}(C_{{\doT\gamma},\delta})\simeq \mathrm{MAut}(\mathcal{C}_{{\doT\gamma},\delta})$.

We start from considering the automorphisms of $\mathcal{A}$.
\begin{lemma}\it\label{l:AutA}
$\mathrm{PAut}(\mathcal{A})$ is the set of all affine permutations of $\Z_2^{\doT\gamma} \times \Z_4^\delta$ that
preserve the negation.
\end{lemma}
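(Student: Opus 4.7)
The plan is to prove the two inclusions separately, following the blueprint of claim (iii) in the proof of Theorem~\ref{th:MAutZ4} but adapting the test functions so that they lie in $\mathcal{A}$, not just in $\mathcal{B}$.

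For the easy inclusion ($\supseteq$), assume $\sigma$ is an affine permutation of $\Z_2^{\dot\gamma} \times \Z_4^\delta$ with $\sigma(-v)=-\sigma(v)$ and let $f\in \mathcal{A}$. Then $\sigma f(v)=f(\sigma^{-1}(v))$ is affine as a composition of affine maps, and the negation-preserving property gives $\sigma f(-v)=f(\sigma^{-1}(-v))=f(-\sigma^{-1}(v))=-f(\sigma^{-1}(v))=-\sigma f(v)$, so $\sigma f\in\mathcal{A}$ by the characterization (\ref{eq:defA}).

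For the harder inclusion ($\subseteq$), suppose $\sigma\in\mathrm{PAut}(\mathcal{A})$. I would write the components of $\sigma^{-1}$ as $\sigma^{-1}(v)=(\sigma'_1(v),\ldots,\sigma'_{\dot\gamma}(v),\sigma''_1(v),\ldots,\sigma''_\delta(v))$ and then use two families of coordinate-extracting test functions: the $\Z_4$-projections $g_j(x_1,\ldots,x_{\dot\gamma},y_1,\ldots,y_\delta)=y_j$, which clearly lie in $\mathcal{A}$, and the doubled $\Z_2$-projections $f_i(x_1,\ldots,x_{\dot\gamma},y_1,\ldots,y_\delta)=2x_i$. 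The only subtle point is that one must verify $f_i\in\mathcal{A}$: $f_i$ is affine with $f_i(\overline 0)=0$, and $f_i(-v)=2x_i=-f_i(v)$ since $2\equiv -2\pmod 4$ and $-x_i=x_i$ in $\Z_2$. Applying $\sigma$ to these, $\sigma g_j=\sigma''_j$ must be affine and $\sigma f_i=2\sigma'_i$ must be affine, so each $\sigma''_j$ is affine $\Z_4$-valued and each $\sigma'_i$ is affine $\Z_2$-valued. Hence $\sigma^{-1}$, and therefore $\sigma$, is affine.

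It remains to show $\sigma$ preserves negation. For every $f\in \mathcal{A}$, the relation $\sigma f\in\mathcal{A}$ gives $f(\sigma^{-1}(-v))=-f(\sigma^{-1}(v))=f(-\sigma^{-1}(v))$ for all $v$. Specializing to the coordinate test functions $g_j$ and $f_i$ from the previous step separates the points of $\Z_2^{\dot\gamma} \times \Z_4^\delta$: comparing $g_j$-values forces the $\Z_4$-coordinates of $\sigma^{-1}(-v)$ and $-\sigma^{-1}(v)$ to agree, while comparing $f_i$-values (working mod $2$) forces the $\Z_2$-coordinates to agree as well. Therefore $\sigma^{-1}(-v)=-\sigma^{-1}(v)$ for all $v$, and consequently $\sigma(-v)=-\sigma(v)$.

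The main obstacle is the mild subtlety of choosing test functions that genuinely live in $\mathcal{A}$ (the negation-symmetry requirement rules out the plain constant $\1$-type functions used in Theorem~\ref{th:MAutZ4}) and then exploiting the extra constraint $f(-v)=-f(v)$ to recover the negation-preserving property of $\sigma$; beyond that the argument is a direct transposition of the affineness argument already used for $\mathrm{MAut}(\mathcal{B})$.
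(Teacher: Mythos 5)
Your proof is correct and follows essentially the same route as the paper: the easy inclusion from the characterization (\ref{eq:defA}), and the hard inclusion by re-running the component-extraction argument of claim (iii) in the proof of Theorem~\ref{th:MAutZ4} with the test functions $g_j(x,y)=y_j$ and $f_i(x,y)=2x_i$, which you correctly verify lie in $\mathcal{A}$. The paper merely compresses this into a one-line reference to that claim, whereas you spell out the extra step of extracting negation-preservation coordinate-wise from $\sigma f\in\mathcal{A}$; the substance is identical.
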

\begin{IEEEproof}
By (\ref{eq:defA}), any affine permutation $\sigma$ of $\Z_2^{\doT\gamma} \times \Z_4^\delta$ that
satisfies $\sigma(-x)=-\sigma(x)$
is an automorphism of $\mathcal{A}$.

Following the arguments of claim (iii) in the proof of Theorem~\ref{th:MAutZ4}, we can see
that any component of $\sigma$ from
$\mathrm{PAut}(\mathcal{A})$ is affine and, moreover, preserves the negation.
Hence, it is true for the whole $\sigma$.
\end{IEEEproof}

\begin{lemma}\it\label{l:AutM} It holds
$\mathrm{Aut^-}(C_{{\doT\gamma},\delta})=\mathrm{PAut}(\mathcal{A})$.
\end{lemma}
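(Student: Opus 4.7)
The plan is to prove the two inclusions $\mathrm{PAut}(\mathcal{A}) \subseteq \mathrm{Aut^-}(C_{\dot\gamma,\delta})$ and $\mathrm{Aut^-}(C_{\dot\gamma,\delta}) \subseteq \mathrm{PAut}(\mathcal{A})$ separately. The key structural fact I would exploit is that $\varphi^+$ induces a natural map $\mathcal{A} \to C_{\dot\gamma,\delta}$, $f \mapsto \varphi^+\circ f$, and that a coordinate permutation $\sigma$ of $\Z_2^{\dot\gamma}\times\Z_4^\delta$ acts on a function $h$ by $h \mapsto h \circ \sigma^{-1}$. Thus both automorphism conditions can be reformulated on the level of functions, which makes Lemma~\ref{l:AutA} immediately applicable.

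The forward direction $\mathrm{PAut}(\mathcal{A}) \subseteq \mathrm{Aut^-}(C_{\dot\gamma,\delta})$ is routine: by Lemma~\ref{l:AutA}, such a $\sigma$ is affine and negation-preserving, so for any $f \in \mathcal{A}$ the composite $f \circ \sigma^{-1}$ still lies in $\mathcal{A}$, hence $\sigma(\varphi^+\circ f) = \varphi^+ \circ (f \circ \sigma^{-1}) \in C_{\dot\gamma,\delta}$; negation preservation is exactly the additional condition defining $\mathrm{Aut^-}$.

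For the reverse direction, I would take $\sigma \in \mathrm{Aut^-}(C_{\dot\gamma,\delta})$ and an arbitrary $f \in \mathcal{A}$, and pick the $f' \in \mathcal{A}$ with $\varphi^+\circ f' = (\varphi^+\circ f)\circ\sigma^{-1}$. Setting $g := f \circ \sigma^{-1}$, the hypothesis that $\sigma^{-1}$ preserves negation together with $f \in \mathcal{A}$ gives $g(-x) = -g(x)$ for all $x$. Pointwise, $\varphi^+(g(x)) = \varphi^+(f'(x))$ only constrains $g(x)$ to lie in $\{f'(x),\, 3-f'(x)\}$; the goal is to force $g(x) = f'(x)$ everywhere, so that $g = f' \in \mathcal{A}$ and therefore $\sigma \in \mathrm{PAut}(\mathcal{A})$.

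The main obstacle, and really the only non-trivial step, is ruling out the alternative branch $g(x) = 3 - f'(x)$. I would argue pointwise: if $g(x_0) = 3 - f'(x_0)$, then negation preservation yields $g(-x_0) = -g(x_0) = 1 + f'(x_0)$ and $f'(-x_0) = -f'(x_0)$; matching $g(-x_0)$ against its two allowed values $-f'(x_0)$ and $3 + f'(x_0)$ gives either $2f'(x_0) \equiv 3 \pmod 4$ or $3 \equiv 1 \pmod 4$, both impossible since $2f'(x_0) \in \{0,2\}$. Thus $g$ agrees with $f'$ at every coordinate, which closes the reverse inclusion and completes the proof.
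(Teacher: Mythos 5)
Your proof is correct and follows essentially the same route as the paper: the forward inclusion via Lemma~\ref{l:AutA}, and the reverse inclusion via the fact that an antisymmetric $f\in\mathcal{A}$ is uniquely recoverable from $\varphi^+\circ f$ once negation is preserved. The only cosmetic difference is that the paper makes this recoverability explicit through the formula $f_x(v)=\varphi^{-1}(x(v),x(-v))$, whereas you establish the same uniqueness by eliminating the branch $g(x)=3-f'(x)$ pointwise.
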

\begin{IEEEproof}
It is straightforward from the definition of
$C_{{\doT\gamma},\delta}$ that
$\mathrm{PAut}(\mathcal{A}) \subseteq \mathrm{Aut}(C_{{\doT\gamma},\delta})$.
Since by Lemma~\ref{l:AutA} the permutations from
$\mathrm{PAut}(\mathcal{A})$
preserve the negation, we have
$\mathrm{PAut}(\mathcal{A}) \subseteq \mathrm{Aut^-}(C_{{\doT\gamma},\delta})$.

Given a $\Z_2$-valued function $x$
from $C_{{\doT\gamma},\delta}$,
we can reconstruct the corresponding
$\Z_4$-valued function $f_x$ from $\mathcal{A}$ by the rule
$f_x(v) = \varphi^{-1}(x(v),x(-v))$.
Since an automorphism $\sigma$ from $\mathrm{Aut^-}(C_{{\doT\gamma},\delta})$ preserves the negation, we have
\begin{IEEEeqnarray*}{rCl}
\sigma f_x(v)
&=& f_{x}(\sigma^{-1}(v))
= \varphi^{-1}\big(x(\sigma^{-1}(v)), x(-\sigma^{-1}(v))\big)\\
&=& \varphi^{-1}\big(x(\sigma^{-1}(v)), x(\sigma^{-1}(-v))\big) \\
&=& \varphi^{-1}\big(\sigma x(v),\sigma x(-v)\big)
= f_{\sigma x}(v).
\end{IEEEeqnarray*}
Consequently, such $\sigma$
sends functions from $\mathcal{A}$
to functions from $\mathcal{A}$, which means
$\mathrm{Aut^-}(C_{{\doT\gamma},\delta}) \subseteq \mathrm{PAut}(\mathcal{A})$.
\end{IEEEproof}
The above lemmas and direct calculations,
similar to that in Section~\ref{s:mautZ4},
give the following result.
\begin{theorem}\it\label{th:MAutZ2Z4}
The monomial automorphism group
$\mathrm{MAut}(\mathcal{C}_{{\doT\gamma},\delta})$
of the $\Z_2\Z_4$-additive Hadamard code
$\mathcal{C}_{{\doT\gamma},\delta}$ of type
$(2^{{\doT\gamma}+\delta},\beta; \doT\gamma+1,\delta)$
is isomorphic to the group of affine permutations
of $\Z_2^{\doT\gamma} \times \Z_4^\delta$ that
preserve the negation. The cardinality of this group satisfies
$$ |\mathrm{MAut}(\mathcal{C}_{{\doT\gamma},\delta})| =
2^{\frac{1}{2}{\doT\gamma^2}+\frac{1}{2}{\doT\gamma} +2{\doT\gamma}\delta + \frac{3}{2}\delta^2+\frac{1}{2}\delta}\prod_{i=1}^{{\doT\gamma}}(2^i-1)\prod_{j=1}^{\delta}(2^j-1).
$$
\end{theorem}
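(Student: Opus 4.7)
The structural part of the theorem is essentially packaged in the preceding lemmas. Combining Lemma~\ref{l:AutA} and Lemma~\ref{l:AutM} gives that $\mathrm{Aut^-}(C_{\dot\gamma,\delta})$ equals the group of affine permutations of $\Z_2^{\dot\gamma}\times\Z_4^\delta$ that commute with negation, while the paragraph preceding Lemma~\ref{l:AutA} establishes the isomorphism $\mathrm{MAut}(\mathcal{C}_{\dot\gamma,\delta})\simeq\mathrm{Aut^-}(C_{\dot\gamma,\delta})$ via the correspondence between $\Z_4$-coordinates of $\mathcal{C}_{\dot\gamma,\delta}$ and orbits $\{v,-v\}$, and between $\Z_2$-coordinates and self-opposite elements. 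Chaining these identifications immediately delivers the first (structural) sentence of the theorem, so the real work is the cardinality count.

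The plan is to write any such $\sigma$ as $\sigma(x)=L(x)+c$ with $L$ linear, and to count $L$ and $c$ separately. Since linear maps automatically satisfy $L(-x)=-L(x)$, the negation-preservation condition collapses to $2c=0$, so $c$ ranges over the elements of order at most two of $\Z_2^{\dot\gamma}\times\Z_4^\delta$, which number $2^{\dot\gamma+\delta}$. The linear part $L$ is then counted freely, mirroring the recipe of Theorem~\ref{th:MAutZ4} with the roles of $(\gamma,\dot\delta)$ swapped to $(\dot\gamma,\delta)$: first fix the images $y_1,\ldots,y_\delta$ of the canonical order-four generators by, at step $j$, choosing an order-four element of the quotient $\Z_2^{\dot\gamma}\times\Z_4^\delta/\langle y_1,\ldots,y_{j-1}\rangle$ together with an arbitrary lift; then fix the images $x_1,\ldots,x_{\dot\gamma}$ of the order-two generators among the order-two elements lying outside the span of the already-chosen images. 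These two stages contribute the factors
$$\prod_{j=1}^{\delta}2^{\dot\gamma+\delta+j-1}(2^{\delta-j+1}-1)\quad\text{and}\quad\prod_{i=1}^{\dot\gamma}\bigl(2^{\dot\gamma+\delta}-2^{\delta+i-1}\bigr),$$
respectively.

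Multiplying these two products by the translation count $2^{\dot\gamma+\delta}$ and regrouping, the powers of two add up to $\frac{1}{2}\dot\gamma^2+\frac{1}{2}\dot\gamma+2\dot\gamma\delta+\frac{3}{2}\delta^2+\frac{1}{2}\delta$, while the odd factors combine into $\prod_{i=1}^{\dot\gamma}(2^i-1)\prod_{j=1}^{\delta}(2^j-1)$, matching the claim. The mildly fiddly points are the exponent bookkeeping in this final simplification and the routine verifications in the iterative selection step---that each order-four element of the quotient chosen at step $j$ lifts to an order-four element of the full group (automatic, since $\Z_2^{\dot\gamma}\times\Z_4^\delta$ has exponent four) and that the resulting family extends to a generating set---but both are direct transcriptions of what is done in the proof of Theorem~\ref{th:MAutZ4}, so I do not expect any substantially new difficulty.
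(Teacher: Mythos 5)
Your proposal is correct and follows essentially the same route as the paper, which itself only says the result follows from Lemmas~\ref{l:AutA} and~\ref{l:AutM} together with ``direct calculations, similar to that in Section~\ref{s:mautZ4}''; your explicit observation that negation-preservation forces the translation part $c$ to satisfy $2c=0$ (hence $2^{\dot\gamma+\delta}$ choices instead of $2^{\dot\gamma+2\delta}$), while the linear part is unconstrained, is exactly the intended calculation, and your two products and the final exponent bookkeeping check out against the claimed formula.
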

\pagebreak

\section{An upper bound on the order of the permutation automorphism group}\label{s:ub}

In this and next sections, we establish
the structure and the order of
the permutation automorphism group for
the $\Z_2\Z_4$-linear Hadamard codes. In the rest of the paper, we consider $\Z_2\Z_4$-linear codes,
which are binary codes, and in the examples, we will use
$\0$, $\1$, $\2$, and $\3$ as short notations for $00$, $01$, $11$, and $10$, respectively,
placed in the coordinates $\alpha+2i-1$, $\alpha+2i$, $i\in\{1,\ldots,\beta\}$.

Let $C=C_{{\doT\gamma},\delta}$ be a $\Z_2\Z_4$-linear Hadamard code of length $n$
and type $(\alpha, \beta; {\doT\gamma}+1, \delta)$
with $\alpha \not = 0$, where $n=2^{{\doT\gamma}+2\delta}$.
The code $C$ is generated by $\gamma={\doT\gamma}+1$ words $y$, $u_1, \ldots, u_{\doT\gamma}$ of order $2$
and $\delta$ words $v_1, \ldots, v_\delta$ of order $4$, where $y$ is the all-one word of length $n$.
Let $G=G_{{\doT\gamma},\delta}$ be the matrix formed by the rows $y$, $u_1, \ldots, u_{\doT\gamma}$, $v_1, \ldots, v_\delta$.
Let $\cal G$ be the matrix where the columns are all $2^{\doT\gamma+\delta}$ binary columns from $\{1\}\times \{0,1\}^{\delta+\doT\gamma}$ and
half of $2^{\doT\gamma+2\delta}-2^{\doT\gamma+\delta}$  quaternary
columns of order $4$ from $\{2\}\times \{0,1,2,3\}^{\delta}\times \{0,2\}^{\doT\gamma}$,
each pair of opposite column vectors being represented by only one column, see e.g. \cite{PheRifVil:2006}.
Then, $G$ can be seen as the matrix $\cal G$ where the quaternary columns are written as pairs of binary columns under the Gray map.
We assume that $\delta \geq 2$, since
for $\delta=0$ and $\delta=1$, $C$ is the binary linear Hadamard code of length $n=2^t$,
so its permutation automorphism group is the general affine group $\mathrm{GA}(t,2)$,
which has order $2^t(2^t-1)(2^t-2)\cdots(2^t-2^{t-1})$ \cite{MWS}.

The kernel of the code $C$, $\mathrm{Ker}(C)$, is generated by the $1+\delta+{\doT\gamma}$ words $y$, $w_1, \ldots, w_\delta$, $u_1, \ldots, u_{\doT\gamma}$, where $w_j=v_j \star v_j$, of order 2 \cite{PheRifVil:2006}. Let $K=K_{{\doT\gamma},\delta}$ be the matrix formed by these rows $y$, $w_1, \ldots, w_\delta$, $u_1, \ldots, u_{\doT\gamma}$. We assume w.l.o.g. that the columns of $K$ are ordered lexicographically.

\begin{lemma}[\cite{PheRifVil:2006}]\it \label{l:ker}
The matrix $K$ is a generator matrix for $\mathrm{Ker}(C)$.
In particular, $\mathrm{Ker}(C)$ consists of all the elements
of the group $(C,\star)$ of order $2$ or less (without $\1$s and $\3$s).
\end{lemma}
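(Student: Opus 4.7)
The plan is to exhibit three subsets of $C$ — the row span $\langle K \rangle$, the set of $\star$-involutions in $(C, \star)$, and the subset $S := \{c \in C : c \text{ has no }\1\text{ or }\3\}$ — and show they coincide with $\ker(C)$.

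First I would settle the easy characterizations. From the definition of $\pi_c$, the equality $\pi_c(c) = c$ holds iff $c$ has no $\1$- or $\3$-pair (in such a pair the two binary entries differ), iff $c \in S$. Hence $c \star c = c + \pi_c(c) = \overline{0}$ precisely on $S$, identifying $S$ as the order-$\leq 2$ subgroup of $(C, \star)$. Moreover, for $c \in S$ we have $\pi_c = \mathrm{id}$, so $c + d = c \star d$ for every $d \in C$; then $c + C = c \star C = C$, so $S \subseteq \ker(C)$.

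Next I would match $S$ with $\langle K \rangle$. Each generator in $K$ lifts under $\Phi^{-1}$ to an element of $\C$ whose $\Z_4$-part lies in $\{0,2\}^{\beta}$: the rows $u_i$ and $y$ have order $2$ in $\C$ by construction, while $w_j = v_j \star v_j$ lifts to $2\Phi^{-1}(v_j)$. Restricted to such lifts, $\Phi$ is a $\Z_2$-linear embedding (sending the entry $0$ to $00$ and $2$ to $11$ on each quaternary coordinate), so the rows of $K$ are $\Z_2$-linearly independent and their span equals $\Phi(\C_0)$, where $\C_0 \subseteq \C$ is the subgroup of elements of order at most $2$. Since $\C$ is of type $(\alpha, \beta; \dot\gamma + 1, \delta)$, we have $\C_0 \cong \Z_2^{\gamma + \delta} = \Z_2^{1 + \dot\gamma + \delta}$ and so $\dim \langle K \rangle = 1 + \dot\gamma + \delta$. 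Since $\Phi^{-1}(S) = \C_0$ by the previous paragraph, it follows that $\langle K \rangle = S$.

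The remaining inclusion $\ker(C) \subseteq S$ is the main obstacle. The cleanest handle is the Gray-map identity $\Phi(\xi) + \Phi(\eta) = \Phi(\xi + \eta + 2\xi\eta)$ on the $\Z_4$-part (verified by a case analysis of the four quaternary values), which turns the membership $\Phi(\xi) \in \ker(C)$ into the condition $2\xi\eta \in \C$ for every $\eta \in \C$. Because every nonzero codeword of $\C$ has Gray weight in $\{n/2,n\}$ while the Gray image of $2\xi\eta$ has weight twice the number of coordinates where $\xi$ and $\eta$ are both odd, specialising $\eta$ to $\xi$ itself and then to linear combinations of $\xi$ with the order-$4$ generators $v_j$ forces, for $\delta \geq 2$, the odd $\Z_4$-support of $\xi$ to be empty. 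Equivalently, one may simply invoke the dimension formula $\dim \ker(C) = 1 + \dot\gamma + \delta$ proved in \cite{PheRifVil:2006} for these codes: combined with $\dim \langle K \rangle = 1 + \dot\gamma + \delta$ and $\langle K \rangle \subseteq \ker(C)$, this forces $\ker(C) = \langle K \rangle = S$ directly.
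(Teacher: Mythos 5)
The paper never proves this lemma: it is imported wholesale from \cite{PheRifVil:2006}, so there is no internal argument to compare against and your write-up is a genuine addition. Most of it is sound. The identification of $S=\{c\in C : c \mbox{ has no } \1 \mbox{ or } \3\}$ with the $2$-torsion of $(C,\star)$ is correct ($\pi_c$ transposes exactly the pairs where the two bits differ, so $c\star c=\overline 0$ iff $c\in S$), and for $c\in S$ we get $\pi_c=\mathrm{id}$, hence $c+C=c\star C=C$ and $S\subseteq\ker(C)$. The step $\langle K\rangle=\Phi(\C_0)=S$ is also fine: the Gray map is additive on $\{0,2\}$-entries, and $\Phi^{-1}(y)$, $\Phi^{-1}(u_i)$, $2\Phi^{-1}(v_j)$ form an $\mathbb{F}_2$-basis of the $2$-torsion subgroup $\Z_2^{\gamma}\times 2\Z_4^{\delta}$ of $\C$, giving $\dim\langle K\rangle=1+\dot\gamma+\delta$.

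The weak point is the direct argument for $\ker(C)\subseteq S$. Specialising $\eta$ to $\xi$ itself yields nothing: if $\xi$ has odd $v$-coefficient vector $b\bmod 2\neq \overline 0$, its odd support is cut out by one nontrivial mod-$2$ linear condition on the columns of $\mathcal{G}$ and therefore has size exactly $n/4$, so $\Phi(2\xi^2)$ has weight $n/2$ and the Hadamard weight constraint is \emph{not} violated. What actually works is $\eta=v_j$ for a generator whose mod-$2$ coefficient vector is independent of $b$ (this is precisely where $\delta\geq 2$ is needed): the two parity conditions are then independent, the common odd support consists of $n/8$ opposite pairs, and $\Phi(2\xi v_j)$ has weight $n/4\notin\{0,n/2,n\}$, a contradiction. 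Your sketch points in this direction but omits the count, so as written that route is incomplete. However, your fallback --- $\langle K\rangle\subseteq\ker(C)$ plus $\dim\langle K\rangle=1+\dot\gamma+\delta=\dim\ker(C)$, the last equality quoted from \cite{PheRifVil:2006} --- is legitimate and closes the proof; since the paper attributes the entire lemma to that reference, this is no weaker than what the paper itself does. Do make the standing hypothesis $\delta\geq 2$ explicit: for $\delta=1$ the code is linear, $\ker(C)=C\supsetneq\langle K\rangle$, and the statement fails.
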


\begin{example} \label{ex:1}
Let $C_{1,2}$ be the $\Z_2\Z_4$-linear Hadamard of length $n=32$ and type $(8,12;2,2)$, that is, generated by $\gamma={\doT\gamma}+1=2$
words of order 2 and $\delta=2$ words of order 4. Then,
$$
G_{1,2}=\left(\begin{array}{c@{\ }c@{\ \ }c@{\ }c@{\ \ }c@{\ }c@{\ \ }c@{\ }c}
1111& 1111& \2\2& \2\2& \2\2& \2\2& \2\2& \2\2\\ \hline
0011& 0011& \0\2& \0\2& \1\1& \1\1& \1\1& \1\1\\
0101& 0101& \1\1& \1\1& \0\2& \0\2& \1\3& \1\3\\ \hline
0000& 1111& \0\0& \2\2& \0\0& \2\2& \0\0& \2\2
\end{array} \right)
$$
and
\begin{IEEEeqnarray*}{l}
K_{1,2}=\left(\begin{array}{c@{\ }c@{\ \ }c@{\ }c@{\ \ }c@{\ }c@{\ \ }c@{\ }c}
1111& 1111& \2\2& \2\2& \2\2& \2\2& \2\2& \2\2\\ \hline
0000& 0000& \0\0& \0\0& \2\2& \2\2& \2\2& \2\2\\
0000& 0000& \2\2& \2\2& \0\0& \0\0& \2\2& \2\2\\ \hline
0000& 1111& \0\0& \2\2& \0\0& \2\2& \0\0& \2\2
\end{array}\right) \phantom{\quad\mbox{and}\qquad}
\\
=\left(\begin{array}{c@{\ }c@{\ \ }c@{\ }c@{\ \ }c@{\ }c@{\ \ }c@{\ }c}
1111& 1111& 11\,11& 11\,11& 11\,11& 11\,11& 11\,11& 11\,11\\ \hline
0000& 0000& 00\,00& 00\,00& 11\,11& 11\,11& 11\,11& 11\,11\\
0000& 0000& 11\,11& 11\,11& 00\,00& 00\,00& 11\,11& 11\,11\\ \hline
0000& 1111& 00\,00& 11\,11& 00\,00& 11\,11& 00\,00& 11\,11
\end{array}\right)\!.
\end{IEEEeqnarray*}
\end{example}
\pagebreak
\begin{example} \label{ex:2}
Let $C_{0,3}$ be the $\Z_2\Z_4$-linear Hadamard code of length $n=64$ and type $(8,28;1,3)$, that is,
generated by $\gamma={\doT\gamma}+1=1$
word of order $2$ and $\delta=3$ words of order $4$. Then, $G_{0,3}$ and $K_{0,3}$ are, respectively,
\begin{eqnarray*}
\left(\begin{array}{@{\,}c@{\ }c@{\ }c@{\ }c@{\ }c@{\ }c@{\ }c@{\ }c@{\,}}
11111111&\2\2\2\2&\2\2\2\2&\2\2\2\2&\2\2\2\2&\2\2\2\2&\2\2\2\2&\2\2\2\2\\ \hline
00001111&\0\0\2\2&\0\0\2\2&\0\0\2\2&\1\1\1\1&\1\1\1\1&\1\1\1\1&\1\1\1\1\\
00110011&\0\2\0\2&\1\1\1\1&\1\1\1\1&\0\0\2\2&\0\0\2\2&\1\1\3\3&\1\1\3\3\\
01010101&\1\1\1\1&\0\2\0\2&\1\3\1\3&\0\2\0\2&\1\3\1\3&\0\2\0\2&\1\3\1\3\\ \hline
\end{array} \right), \\
\\
\left(\begin{array}{@{\,}c@{\ }c@{\ }c@{\ }c@{\ }c@{\ }c@{\ }c@{\ }c@{\,}}
11111111&\2\2\2\2&\2\2\2\2&\2\2\2\2&\2\2\2\2&\2\2\2\2&\2\2\2\2&\2\2\2\2\\ \hline
00000000&\0\0\0\0&\0\0\0\0&\0\0\0\0&\2\2\2\2&\2\2\2\2&\2\2\2\2&\2\2\2\2\\
00000000&\0\0\0\0&\2\2\2\2&\2\2\2\2&\0\0\0\0&\0\0\0\0&\2\2\2\2&\2\2\2\2\\
00000000&\2\2\2\2&\0\0\0\0&\2\2\2\2&\0\0\0\0&\2\2\2\2&\0\0\0\0&\2\2\2\2\\ \hline
\end{array}\right).
\end{eqnarray*}
\end{example}

All coordinates are naturally divided into $2^{{\doT\gamma}+\delta}$ groups of size $2^{\delta}$,
which will be referred to as \emph{blocks}, such that the columns of $K$ in a block coincide.
Similarly, by the values in the rows of $K$ given by $w_1, \ldots, w_\delta$, the coordinates
are divided into $2^{\delta}$ groups of size $\alpha=2^{{\doT\gamma}+\delta}$,
which will be referred to as \emph{macroblocks}.

As representatives of the kernel cosets that partition the code $C$,
we can choose the combinations $v(s)=s_1v_1\star  \cdots \star s_\delta v_\delta$,
where $s=(s_1, \ldots,s_\delta) \in \{0,1\}^\delta$, that is, $C=\bigcup_{s \in \{0,1\}^\delta} (\mathrm{Ker}(C)+v(s))$ \cite{FePuVi08rk}.
Let $S=S_{{\doT\gamma},\delta}$ be the matrix consisting
from the vectors $v(s)$, $s\in\{0,1\}^\delta$, as rows.


\begin{lemma}\it\label{l:v(s)}
For any
$s_1, \ldots,s_\delta \in \{0,1\}$,
we have
\begin{equation}\label{eq:star-to-plus}
v(s)=s_1v_1\star \cdots \star s_\delta v_\delta
= \sum_{j=1}^{\delta}s_jv_j
+\!\!\!\sum_{1\le j< j'\le \delta}\!s_j s_{j'} w_j \bullet w_{j'},
\end{equation}
where $\bullet$ is the component-wise product of two vectors.
\end{lemma}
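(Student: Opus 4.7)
The plan is to prove the identity by induction on $\delta$, using the following auxiliary identity valid for any two codewords $a,b\in C$:
$$a \star b = a + b + (a \star a) \bullet (b \star b). \qquad (\ast)$$
This reduces the $\star$-operation to ordinary binary addition plus a correction that is itself expressible as a $\bullet$-product, which is exactly the shape that appears in the desired formula.

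I would first establish $(\ast)$ by a coordinate-wise verification. On the $\alpha$ binary coordinates, $\star$ coincides with $+$ and $a \star a$ vanishes, so both sides reduce to $a+b$. On each pair $(\alpha+2i-1, \alpha+2i)$, let $\alpha,\beta \in \Z_4$ be the values of $\Phi^{-1}(a)$ and $\Phi^{-1}(b)$ at the $i$-th quaternary coordinate; then $(\ast)$ reduces to $\varphi(\alpha+\beta) = \varphi(\alpha) + \varphi(\beta) + \varphi(2\alpha) \bullet \varphi(2\beta)$. A small case analysis shows that $\varphi(\alpha) + \varphi(\beta) - \varphi(\alpha+\beta)$ equals $(1,1)$ exactly when both $\alpha$ and $\beta$ are odd in $\Z_4$, which coincides with the support of $\varphi(2\alpha) \bullet \varphi(2\beta)$.

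The base case $\delta = 1$ of the lemma is immediate, since the second sum is empty and both sides equal $s_1 v_1$. For the inductive step, write $V = s_1 v_1 \star \cdots \star s_{\delta-1} v_{\delta-1}$ and apply $(\ast)$ to $V$ and $s_\delta v_\delta$, obtaining
$$v(s) = V + s_\delta v_\delta + (V\star V) \bullet (s_\delta v_\delta \star s_\delta v_\delta).$$
The second factor equals $s_\delta w_\delta$ by the definition $w_\delta = v_\delta \star v_\delta$. For the first factor, since $(C, \star)$ is abelian, $V \star V = s_1 w_1 \star \cdots \star s_{\delta-1} w_{\delta-1}$; because each $w_j$ has order $2$ it lies in $\ker(C)$ (Lemma~\ref{l:ker}), and $\star$ coincides with $+$ on the kernel, so $V \star V$ equals the ordinary binary sum $s_1 w_1 + \cdots + s_{\delta-1} w_{\delta-1}$. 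Substituting the inductive hypothesis for $V$ and distributing $\bullet$ over $+$, the cross terms $s_j s_\delta (w_j \bullet w_\delta)$ for $j<\delta$ fold into the existing double sum to yield exactly $\sum_{1\le j<j'\le\delta} s_j s_{j'} (w_j \bullet w_{j'})$.

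The main obstacle is the coordinate-wise verification of the auxiliary identity $(\ast)$, particularly its quaternary-pair version; once that is in hand, the rest is routine algebraic bookkeeping within the abelian group $(C, \star)$ and the linear subcode $\ker(C)$.
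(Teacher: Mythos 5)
Your proof is correct and follows essentially the same route as the paper: the paper's one-line proof simply cites the identity $v_i\star v_j=v_i+v_j+w_i\bullet w_j$ (from Hammons et al.\ and Fern\'andez-C\'ordoba--Pujol--Villanueva) and leaves the iteration implicit, which is exactly your $(\ast)$ plus the induction on $\delta$. The only difference is that you verify the key identity coordinate-wise instead of citing it, which makes the argument self-contained; your handling of $V\star V$ via the kernel is also sound.
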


\begin{IEEEproof}
Straightforward using that $v_i \star v_j=v_i+v_j+ w_i \bullet w_j$ \cite{HammonsOth:Z4_linearity,FePuVi08rk}.
\end{IEEEproof}

\begin{example}\label{ex:S}
Considering the $\Z_2\Z_4$-linear Hadamard code $C_{1,2}$ from Example~\ref{ex:1}, 
$S_{1,2}$ is
\begin{IEEEeqnarray*}{c}
S_{1,2}=\left(\!\!\begin{array}{c}
v(00)\\
v(10)\\
v(01)\\
v(11)\\
\end{array}\!\!\right) =
\left(\!
\begin{array}{c@{\ }c@{\ \ }c@{\ }c@{\ \ }c@{\ }c@{\ \ }c@{\ }c}
0000& 0000&  \0\0&  \0\0&  \0\0&  \0\0&  \0\0&  \0\0\\
0011& 0011&  \0\2&  \0\2&  \1\1&  \1\1&  \1\1&  \1\1\\
0101& 0101&  \1\1&  \1\1&  \0\2&  \0\2&  \1\3&  \1\3\\
0110& 0110&  \1\3&  \1\3&  \1\3&  \1\3&  \2\0&  \2\0\\
\end{array}
\!\right)
\\
\ =\left(\!
\begin{array}{c@{\ }c@{\ \ }c@{\ }c@{\ \ }c@{\ }c@{\ \ }c@{\ }c}
0000& 0000& 00\,00& 00\,00& 00\,00& 00\,00& 00\,00& 00\,00\\
0011& 0011& 00\,11& 00\,11& 01\,01& 01\,01& 01\,01& 01\,01\\
0101& 0101& 01\,01& 01\,01& 00\,11& 00\,11& 01\,10& 01\,10\\
0110& 0110& 01\,10& 01\,10& 01\,10& 01\,10& 11\,00& 11\,00\\
\end{array}\!\right)\!.
\end{IEEEeqnarray*}
Similarly, considering the $\Z_2\Z_4$-linear Hadamard code $C_{0,3}$ from Example~\ref{ex:2}, 
$$S_{0,3}\!=\!\left(\begin{array}{@{}c@{}}
v(000)\\
v(100)\\
v(010)\\
v(110)\\
v(001)\\
v(101)\\
v(011)\\
v(111)\\
\end{array}\right) \!=\!
\left(
\begin{array}{@{\,}c@{\,}c@{\,}c@{\,}c@{\,}c@{\,}c@{\,}c@{\,}c}
00000000&\0\0\0\0&\0\0\0\0&\0\0\0\0&\ldots&\0\0\0\0\\
00001111&\0\0\2\2&\0\0\2\2&\0\0\2\2&\ldots&\1\1\1\1\\
00110011&\0\2\0\2&\1\1\1\1&\1\1\1\1&\ldots&\1\1\3\3\\
00111100&\0\2\2\0&\1\1\3\3&\1\1\3\3&\ldots&\2\2\0\0\\
01010101&\1\1\1\1&\0\2\0\2&\1\3\1\3&\ldots&\1\3\1\3\\
01011010&\1\1\3\3&\0\2\2\0&\1\3\3\1&\ldots&\2\0\2\0\\
01100110&\1\3\1\3&\1\3\1\3&\2\0\2\0&\ldots&\2\0\0\2\\
01101001&\1\3\3\1&\1\3\3\1&\2\0\0\2&\ldots&\3\1\1\3\\
\end{array}
\right)\!.$$
\end{example}
\pagebreak\setstretch{1.02}
\begin{lemma}\it\label{l:fix-block}
 Any automorphism of $C$ stabilizes the partitions of the coordinates into blocks
and macroblocks.
\end{lemma}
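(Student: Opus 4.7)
My plan is to characterize the block and macroblock partitions as equivalence relations on the coordinates determined by canonical subcodes of $\ker(C)$, and then to argue that those subcodes are $\aut(C)$-invariant. Since $\ker(C)=\{x\in\Z_2^n:x+C=C\}$ is defined purely from $C$, every $\sigma\in\aut(C)$ satisfies $\sigma(\ker(C))=\ker(C)$. By Lemma~\ref{l:ker}, the rows of $K$ generate $\ker(C)$; hence two coordinates $i,j$ belong to a common block exactly when the $i$-th and $j$-th columns of $K$ coincide, equivalently when $c_i=c_j$ for every $c\in\ker(C)$. This relation is $\sigma$-invariant, so $\sigma$ sends blocks to blocks.

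For macroblocks, the analogous statement is that $i,j$ lie in a common macroblock iff $c_i=c_j$ for every $c$ in the subcode $W=\langle y,w_1,\ldots,w_\delta\rangle\subseteq\ker(C)$, so it suffices to show that $W$ is $\aut(C)$-invariant. Inside the $\star$-group the canonical description of $W$ combines the all-one vector with the image of doubling, $W=\langle y\rangle+\{c\star c:c\in C\}$, but the operation $\star$ is inherited from the $\Z_4$-structure and is not intrinsic to the binary code. To make the description intrinsic, I plan to use the invariance of the linear span $\langle C\rangle$ in addition to that of $\ker(C)$: Lemma~\ref{l:v(s)} shows that the quotient $\langle C\rangle/\ker(C)$ is generated by the classes of the componentwise products $w_j\bullet w_{j'}$, and since every coordinate permutation commutes with the componentwise product, this pairing on $\ker(C)$ with values in $\langle C\rangle/\ker(C)$ is $\aut(C)$-equivariant. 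The goal would then be to describe $W$ as a radical- or annihilator-type subspace with respect to this pairing and to conclude $\sigma(W)=W$.

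The main obstacle is exactly this last point: pinning down a purely binary description of $W$ that matches the canonical $\Z_4$-description through $\star$-doubling. Once $W$ has been identified as $\aut(C)$-invariant, the macroblock conclusion follows by the same equivalence-class argument that handled blocks.
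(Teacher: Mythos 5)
Your treatment of the blocks is correct and coincides with the paper's: two coordinates lie in the same block exactly when every word of $\ker(C)$ agrees on them, and $\ker(C)$ is preserved by every $\sigma\in\aut(C)$, so blocks go to blocks. For the macroblocks, however, there is a genuine gap, and you flag it yourself: your whole argument rests on showing that the subcode $W=\langle y,w_1,\ldots,w_\delta\rangle\subseteq\ker(C)$ admits an intrinsic, automorphism-invariant description, and that step is never carried out. The sketch you offer for it is also problematic as stated. First, $\langle C\rangle/\ker(C)$ is generated by the classes of the $v_j$ \emph{together with} the $w_j\bullet w_{j'}$, not by the products alone. Second, the pairing $(c,c')\mapsto c\bullet c'$ does not map $\ker(C)\times\ker(C)$ into $\langle C\rangle$: by claim (i) of the paper's proof the rows $w_j,v_j,u_i$ are the coordinate functions of $\Z_2^{\dot\gamma+2\delta}$, so $\langle C\rangle$ is spanned by the affine functions and the quadratic monomials $w_j\bullet w_{j'}$ only, and a product such as $w_j\bullet u_i$ is a quadratic monomial lying \emph{outside} $\langle C\rangle$. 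Hence the ``radical/annihilator with respect to the pairing'' is not yet even defined, let alone shown to equal $W$ and to be $\aut(C)$-invariant.

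The paper avoids identifying $W$ altogether and instead characterizes the macroblock partition directly through a weight invariant of the linear span: (a) every word of $\langle C\rangle$ of weight different from $n/2$ is constant on each macroblock (if a $u_i$ or $v_j$ is involved, the word is balanced), and (b) any two distinct macroblocks are separated by one of the weight-$n/4$ words $w_j\bullet w_{j'}$ or $w_j\bullet(w_{j'}+y)$, which exist because $\delta\geq 2$. Therefore $k$ and $k'$ lie in different macroblocks if and only if $\langle C\rangle$ contains a word of weight less than $n/2$ distinguishing them; since any $\sigma\in\aut(C)$ is an automorphism of $\langle C\rangle$ and preserves weights, the macroblock partition is stabilized. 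To complete your proof you would either need to supply a correct intrinsic description of $W$ (which seems harder than the problem itself) or replace that step by a coordinate-level invariant of the kind the paper uses.
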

\begin{IEEEproof}
Since any automorphism of the code $C$ is an automorphism of its kernel $\mathrm{Ker}(C)$ \cite{PheRif:2002},
the image of any block is obviously a block.

In order to find an invariant for the macroblocks, we use the structure of the linear span of $C$.
As was shown in \cite{PheRifVil:2006}, the linear span is generated
by the words
\begin{equation}\label{eq:basis}
\begin{array}{ll}
y;\\
w_j=v_j\star v_j, & j\in \{1,\ldots,\delta\};\\
w_j \bullet w_{j'},     & 1\le j <j' \le \delta;\\
u_i,              & i\in \{1,\ldots,\doT\gamma\};\\
v_j,              & j\in \{1,\ldots,\delta\}.
\end{array}
\end{equation}

It is straightforward to see the following fact:

(i) \emph{The matrix formed from
$w_1$, \ldots, $w_\delta$,
$v_1$, \ldots, $v_\delta$,
$u_1$, \ldots, $u_{\doT\gamma}$
as rows consists of
all different binary columns
of height $\doT\gamma + 2\delta$}.

Then, we observe the next two facts:

(ii) \emph{Every word of the linear span
       whose weight is different from $n/2$
       has the same value in the coordinates
       of a fixed macroblock.}
       Indeed,
       if a linear combination $x$
       of the basis words (\ref{eq:basis})
       has different values
       in the same macroblock,
       then either some $u_i$ or some $v_j$
       is involved to generate $x$.
       Assume $z$, which is $u_i$ or $v_j$,
       is involved.
       As follows from claim (i),
       for every coordinate,
       there is another one
       such that the values of $z$
       are different in these two coordinates,
       while the values of any other
       basis word from (\ref{eq:basis}) coincide.
       Hence, $x$ has different values
       in each such pair of coordinates,
       which means that its weight is $n/2$.

(iii) \emph{For every two different macroblocks $I$ and $I'$,
 the linear span contains a word
 of weight $n/4$ that has different
 values in the coordinates from these two macroblocks.}
 Indeed, by the definition of the macroblocks,
 there is $w_j$ that has different values
 in the coordinates of $I$ and $I'$.
 Then, for any other $w_{j'}$
 (recall that $\delta \geq 2$), we have
 $w_j = w_j \bullet w_{j'} + w_j \bullet (w_{j'}+y)$.
 Hence,
 at least one of two words $w_j \bullet w_{j'}$
 and $w_j \bullet (w_{j'}+y)$ of weight $n/4$
 differs in the coordinates of $I$ and $I'$.

As follows from claims (ii) and (iii),
there is an invariant
that indicates whether two coordinates
belong to the same macroblock or not:
coordinates $k$ and $k'$
are in different macroblocks
if and only if
the liner span of $C$
contains a word $x=(x_1,\ldots,x_n)$
of weight less than $n/2$
such that $x_k \ne x_{k'}$.
As an automorphism of a code
is obviously an automorphism of its linear span,
the partition of the coordinates
into macroblocks is stabilized.
\end{IEEEproof}

\begin{lemma}\it\label{l:123}
Let $\psi$ be an automorphism of $C$.
Then
\begin{enumerate}

\item[(i)] For every $j$ from $1$ to $\delta$,
 $\psi(w_j)$ is a linear combination
   of $y$, $w_1$, \ldots, $w_\delta$.

\item[(ii)] For every $i$ from $1$ to ${\doT\gamma}$,
  $\psi(u_i)$ is a linear combination
    of $y$, $w_1$, \ldots, $w_\delta$, $u_1$, \ldots, $u_{\doT\gamma}$.

\item[(iii)] For every $s\in \{0,1\}^\delta$,
  \begin{equation}\label{eq:alpha}
   \psi(v(s)) = v(\sigma(s)) + a y
                             + \sum_{j=1}^\delta c_j w_j
                             + \sum_{i=1}^{\doT\gamma} b_i u_i,
  \end{equation}
  where $\sigma$ is a permutation of the set $\{0,1\}^\delta$,
  $a=a(s) \in \{0,1\}$, $c_j=c_j(s) \in \{0,1\}$, $b_i=b_i(s) \in \{0,1\}$.

\item[(iv)] The permutation $\sigma$ is linear.

\item[(v)] If $\psi(u_i)= u_i$ and $\psi(w_j)= w_j$ for all $i \in \{1,\ldots, \doT\gamma\}$ and $j \in \{1,\ldots,\delta\}$, then the functions $a=a(s)$, $c_j=c_j(s)$, $b_i=b_i(s)$ are linear.
\end{enumerate}
\end{lemma}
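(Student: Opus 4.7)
The plan is to take the five parts in order, using two properties of any permutation automorphism $\psi$: it is $\Z_2$-linear on $\Z_2^n$, and it commutes with the pointwise product $\bullet$. Claims (i) and (ii) come from invariants. By Lemma~\ref{l:fix-block}, $\psi$ preserves both $\ker(C)$ and the macroblock partition. Since each $w_j$ is constant on every macroblock, so is $\psi(w_j)$. Writing a general kernel element as $\alpha y+\sum\beta_j w_j+\sum\gamma_i u_i$, this forces $\sum\gamma_i u_i$ to be constant on every macroblock; but within any macroblock the rows $u_1,\ldots,u_{\dot\gamma}$ jointly realize every vector of $\Z_2^{\dot\gamma}$, so all $\gamma_i=0$ and $\psi(w_j)\in\langle y,w_1,\ldots,w_\delta\rangle$, proving (i). Claim (ii) is immediate from $\psi(\ker(C))=\ker(C)$ together with Lemma~\ref{l:ker}. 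For (iii), $\psi$ permutes the $2^\delta$ cosets $v(s)+\ker(C)$ of $C$, giving a permutation $\sigma$ of $\{0,1\}^\delta$ with $\psi(v(s))\in v(\sigma(s))+\ker(C)$; expanding the kernel part by its generators yields (\ref{eq:alpha}).

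For (iv), I would introduce the subspace
\[
N \;=\; \ker(C)\;+\;\mathrm{span}\{w_j\bullet w_{j'} : 1\le j<j'\le\delta\}\;\subseteq\;\langle C\rangle.
\]
Invariance of $N$ under $\psi$ follows from (i): writing $\psi(w_j)=\alpha_j y+\sum_k \mu_{jk}w_k$ and using $\psi(x\bullet z)=\psi(x)\bullet\psi(z)$ together with $y\bullet z=z$ and $w_k\bullet w_k=w_k$, one gets $\psi(w_j\bullet w_{j'})\in\langle y,w_k,w_k\bullet w_{k'}\rangle\subseteq N$. Because (\ref{eq:basis}) is a basis of $\langle C\rangle$ (cf.\ \cite{PheRifVil:2006}), the classes $\bar v_1,\ldots,\bar v_\delta$ form a basis of $\langle C\rangle/N$, and Lemma~\ref{l:v(s)} gives $v(s)\equiv\sum_j s_j v_j\pmod N$. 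Applying the induced linear map $\bar\psi$ on $\langle C\rangle/N$ and using (iii),
\[
\sum_k \sigma(s)_k\,\bar v_k \;=\; \bar\psi(v(s)) \;=\; \sum_j s_j\,\bar\psi(v_j) \;=\; \sum_k\Bigl(\sum_j s_j\,\sigma(e_j)_k\Bigr)\bar v_k,
\]
where $e_1,\ldots,e_\delta$ denotes the standard basis of $\{0,1\}^\delta$. Linear independence then yields $\sigma(s)_k=\sum_j s_j\sigma(e_j)_k$, so $\sigma$ is linear.

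For (v), the hypotheses $\psi(u_i)=u_i$ and $\psi(w_j)=w_j$ give $\psi(w_j\bullet w_{j'})=w_j\bullet w_{j'}$. Applying $\psi$ to the identity $v(s)+v(t)=v(s+t)+\sum_{j<j'}(s_jt_{j'}+s_{j'}t_j)\,w_j\bullet w_{j'}$ from Lemma~\ref{l:v(s)} and comparing with $\psi(v(s))+\psi(v(t))$ expanded via (iii) (and the same identity applied to $v(\sigma(s))+v(\sigma(t))$), the $v(\cdot)$-terms cancel because $\sigma$ is linear by (iv). What remains is an equality of linear combinations in the basis (\ref{eq:basis}); comparing the coefficients of the independent elements $y$, $w_j$, $u_i$ produces $a(s+t)=a(s)+a(t)$, $c_j(s+t)=c_j(s)+c_j(t)$, and $b_i(s+t)=b_i(s)+b_i(t)$, as desired. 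The main obstacle is the quadratic term $\sum_{j<j'}s_js_{j'}\,w_j\bullet w_{j'}$ in the formula for $v(s)$: it prevents the family $\{v(s)\}$ from being a group under coordinatewise addition, and the quotient by $N$ in (iv) is precisely the device that kills these terms so that $\sigma$ is forced to be linear and the coefficient matching in (v) goes through cleanly.
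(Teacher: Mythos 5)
Your proof is correct. Parts (i)--(iii) follow essentially the paper's lines (your (i) just spells out why constancy on macroblocks forces the $u_i$-components of $\psi(w_j)$ to vanish, a step the paper leaves implicit behind Lemma~\ref{l:fix-block}), but your (iv) and (v) take a genuinely different route. For (iv), the paper argues through the criterion that $v(s)+v(s')+v(s'')\in\ker(C)$ exactly when $s+s'+s''=0$, combined with $\psi$ preserving the kernel; you instead pass to the quotient $\langle C\rangle/N$ with $N=\ker(C)+\mathrm{span}\{w_j\bullet w_{j'}\}$, verify $\psi$-invariance of $N$ from the fact that a coordinate permutation commutes with $\bullet$, and read linearity of $\sigma$ off the induced linear map. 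This is arguably the more robust formulation: the quadratic terms $\sum_{j<j'}s_js_{j'}\,w_j\bullet w_{j'}$ mean that for $s+s'+s''=0$ the sum $v(s)+v(s')+v(s'')$ generally lands in $N$ rather than in $\ker(C)$ itself, and your quotient absorbs exactly this, whereas the paper's three-term membership test has to be read modulo that span to be literally correct. For (v), the paper localizes to blocks on which all but one generator vanish and extracts linearity of $a$, $b_i$, $c_j$ coordinate by coordinate; you instead apply $\psi$ globally to the two-term identity of Lemma~\ref{l:v(s)}, use (iv) to cancel the $v(\cdot)$-terms, and compare coefficients in the basis (\ref{eq:basis}). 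Both are valid; the paper's block argument is more elementary and does not need the full basis of the span at that point, while your global computation is more uniform and even yields the identity $\sigma(s)_j\sigma(t)_{j'}+\sigma(t)_j\sigma(s)_{j'}=s_jt_{j'}+s_{j'}t_j$ as a by-product.
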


\begin{IEEEproof}
Since $\psi$ is an automorphism of $\mathrm{Ker}(C)$ \cite{PheRif:2002},
the permuted matrix $\psi(K)$ remains a generator matrix for $\mathrm{Ker}(C)$.
So, (ii) is obvious. Taking into account Lemma~\ref{l:fix-block}, we see that (i) holds as well.

(iii) For every $s\in \{0,1\}^\delta$, since $\psi(v(s))$ belongs to $C=\bigcup_{s \in \{0,1\}^\delta} (\mathrm{Ker}(C)+v(s))$, it can be represented as $v(\sigma(s)) + ay  + \sum_{j=1}^\delta c_j w_j + \sum_{i=1}^{\doT\gamma} b_i u_i \in v(\sigma(s))+\mathrm{Ker}(C)$
 for some $\sigma(s) \in \{0,1\}^\delta$, $a,c_1,...,c_\delta,b_1,...,b_{\doT\gamma} \in\{0,1\}$.
 As $\psi$ is a bijective mapping and $\psi$ is an automorphism of $\mathrm{Ker} (C)$, the mapping $\sigma : \{0,1\}^\delta \rightarrow \{0,1\}^\delta$ is bijective too.

(iv) By Lemmas~\ref{l:v(s)} and~\ref{l:ker},
we have that $v(s)+v(s')+v(s'') \in \mathrm{Ker} (C)$ if and only if $s+s'+s''=0$.
From (\ref{eq:alpha}), we see that $\psi(v(s))+\psi(v(s'))+\psi(v(s'')) \in \mathrm{Ker} (C)$
if and only if $\sigma(s)+\sigma(s')+\sigma(s'')=0$.
The condition
$v(s)+v(s')+v(s'') \in \mathrm{Ker} (C)$ is readily the same as
$\psi(v(s))+\psi(v(s'))+\psi(v(s'')) \in \mathrm{Ker} (C)$ since $\psi \in \mathrm{Aut}(\mathrm{Ker}(C))$.
Thus, $s+s'+s''=0$ is equivalent to $\sigma(s)+\sigma(s')+\sigma(s'')=0$,
which means that $\sigma$ is linear.

(v) The hypothesis of statement (v) implies that $\psi$ does not permute the blocks.
By Lemma~\ref{l:v(s)}, for every block $I$ such that all but one $w_1$, \ldots, $w_\delta$
vanish on $I$, the mapping $s \mapsto v(s)|_{I}$ is linear.
Moreover, permuting the coordinates within the block does not change this property,
i.e., the mapping $s \mapsto \psi(v(s))|_{I}$ is linear too.
We first consider the first such block $I=\{1, \ldots,{2^\delta\}}$.
Since $w_1, \ldots,w_\delta$ and $u_1, \ldots,u_{\doT\gamma}$ vanish on $I$,
equation (\ref{eq:alpha}) turns to $\psi(v(s))|_{I} = v(\sigma(s))|_{I} + a(s) y|_{I}$.
We see that the mapping $s\mapsto a(s)$ must be linear.
Similarly, considering the block $I$ such that
all $w_1$, \ldots, $w_\delta$ and $u_1$, \ldots, $u_{\doT\gamma}$ except $u_i$ (or $w_j$) vanish on $I$,
we obtain the linearity of the mapping  $s \mapsto b_i(s)$ for $i=1, \ldots,{\doT\gamma}$ ($s \mapsto c_j(s)$
for $j=1, \ldots,\delta$, respectively).
\end{IEEEproof}

\begin{lemma}\it\label{l:3+}
Let $\psi$ be an automorphism of $C$ such that
$\psi(u_i)= u_i$ and $\psi(w_j)= w_j$ for all $i \in \{1,\ldots, \doT\gamma\}$ and $j \in \{1,\ldots,\delta\}$.
If $\delta\geq 3$, then $\sigma$ defined in Lemma~\ref{l:123} is the identity permutation.
\end{lemma}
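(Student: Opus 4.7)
The plan is to combine the block-level rigidity of $\psi$ with the explicit realisation from Section~\ref{s:class} to extract a polynomial identity that forces the matrix $A$ of $\sigma$ to be the identity whenever $\delta\geq 3$.

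First, under the hypothesis $\psi$ fixes every row of $K_{\dot\gamma,\delta}$, so $\psi$ maps every block to itself. Parameterise the $2^\delta$ coordinates of a block $I$ with parity vector $q=q(I)\in\{0,1\}^\delta$ by $z\in\{0,1\}^\delta$ via $y_j=q_j+2z_j$; then $v(s)(\cdot)=\varphi^+(\sum_j s_jy_j)$ restricted to $I$ becomes
\[
v(s)|_I(z)=\ell_s(z)+g(s,q),
\]
where $\ell_s(z)=\sum_j s_jz_j$ and $g(s,q)=\bigl[\sum_j s_jq_j\equiv 1,2\pmod 4\bigr]$. Let $\pi_I$ be the permutation $\psi$ induces on $I$. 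Restricting (\ref{eq:alpha}) to $I$ gives $\ell_s(\pi_I^{-1}(z))+g(s,q)=\ell_{\sigma(s)}(z)+g(\sigma(s),q)+\mathrm{const}_I(s)$. Since for every $s$ the left-hand side is affine in $z$ with linear part $\ell_{\sigma(s)}$, $\pi_I^{-1}$ must be affine, of the form $\pi_I^{-1}(z)=A^{T}z+d_I$ with $A$ the matrix of $\sigma$ (the same across all blocks), so $\ell_s(\pi_I^{-1}(z))=\ell_{\sigma(s)}(z)+s\cdot d_I$.

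Second, expand $g(s,q)=s\cdot q+Q_q(s)$ with $Q_q(s)=\sum_{j<j'\in\mathrm{supp}(q)}s_js_{j'}$ (coming from $g(T)\equiv T+\binom{T}{2}\pmod 2$). Absorbing the terms $a(s)+\sum_j c_j(s)q_j+\sum_i b_i(s)x'_i$, which are linear in $s$ by Lemma~\ref{l:123}(v), into the linear part, the block identity reduces to $s\cdot d_I=Q_q(s)+Q_q(\sigma(s))+(\text{linear in }s)$. Hence $Q_q(s)+Q_q(\sigma(s))$ must be linear in $s$ for every $q\in\{0,1\}^\delta$. Matching coefficients of $s_ms_{m'}$ ($m<m'$) as a polynomial in $s$, and then of $q_jq_{j'}$ ($j<j'$) as a polynomial in $q$, yields the minor condition
\[
A_{jm}A_{j'm'}+A_{jm'}A_{j'm}=\bigl[\{j,j'\}=\{m,m'\}\bigr]\pmod 2,
\]
for all $j<j'$ and $m<m'$; that is, every $2\times 2$ minor of $A$ vanishes except the principal ones, each of which equals $1$.

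Finally, I would deduce $A=I$ from this minor condition, the invertibility of $A$, and $\delta\geq 3$. If column $i$ of $A$ contained ones at two rows $j_1,j_2\neq i$, the off-diagonal-minor equations on rows $\{j_1,j_2\}$ force rows $j_1$ and $j_2$ of $A$ to coincide, contradicting invertibility. If column $i$ contained ones at rows $i$ and $j_2\neq i$, pick $k\notin\{i,j_2\}$ (available because $\delta\geq 3$); propagating the minor equations gives $A_{i,k}=A_{j_2,k}=A_{k,i}=0$, whence the non-principal minor on rows $\{j_2,k\}$, columns $\{i,k\}$ forces $A_{k,k}=0$, while the principal minor on $\{i,k\}$ forces $A_{k,k}=1$, a contradiction. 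Hence every column of $A$ has exactly one $1$, so $A=P_\tau$ is a permutation matrix. The principal-minor condition then reads $\tau(\{j,j'\})=\{j,j'\}$ for every pair, and for $\delta\geq 3$ a non-identity $\tau$ would send some $j$ to $j'\neq j$ while being forced to fix or swap $\{j,j''\}$ for a third index $j''$, which is impossible. Therefore $\sigma=\mathrm{id}$. The main obstacle I anticipate is the middle step — correctly isolating the quadratic-in-$s$ part of the block identity — after which the deduction that $A=I$ is essentially forced and uses $\delta\geq 3$ in two distinct places.
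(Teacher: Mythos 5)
Your proof is correct, but it runs along a genuinely different track than the paper's. The paper also starts from the quadratic correction term of Lemma~\ref{l:v(s)}, but it then argues block-by-block and basis-vector-by-basis-vector: it picks a single block $I$ on which exactly $w_2$ and $w_3$ are ``on'', observes that $s\mapsto v(s)|_I$ is linear in the direction $e_1$ but not in any direction $t$ with $t_2=1$, and concludes $\sigma(e_1)\ne t$; repeating over all such blocks forces $\sigma(e_j)=e_j$. You instead work in the affine-function model of Section~\ref{s:class}, extract the full quadratic-in-$s$ part of the block identity simultaneously for every parity vector $q$, and compress the condition ``$Q_q(s)+Q_q(\sigma(s))$ is linear in $s$ for all $q$'' into the system of $2\times 2$ minor equations $A_{jm}A_{j'm'}+A_{jm'}A_{j'm}=[\{j,j'\}=\{m,m'\}]$ for the matrix $A$ of $\sigma$, which you then solve. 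I checked the key computations: the expansion $\varphi^+(T)\equiv T+\binom{T}{2}\pmod 2$, the identification of the linear part of $\pi_I^{-1}$ with $A^{T}$ (which uses Lemma~\ref{l:123}(v), correctly invoked), the coefficient matching in $s$ and then in $q$, and the elimination argument showing $A$ must be a permutation matrix fixing every $2$-subset, hence the identity when $\delta\ge 3$; all steps go through (the intermediate zeros $A_{ki}=A_{kj_2}=A_{ik}=A_{j_2k}=0$ you assert do follow from combining the stated minors with the principal minor on $\{i,j_2\}$). The paper's route is shorter; yours is more systematic and buys a complete characterization of the admissible $\sigma$ -- in particular, for $\delta=2$ your minor system degenerates to the single condition $\det A=1$, which correctly predicts the six extra automorphisms of Lemma~\ref{l:case2} and Theorem~\ref{th:d3-2}.
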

\begin{IEEEproof}
The hypothesis of the lemma implies that $\psi$ stabilizes every block of coordinates.
The idea of the proof is to show that any nonidentical permutation $\sigma$ breaks the linear relations between the vectors $v(s)$ in a fixed block $I$ of coordinates.

We already know that
$\sigma:\{0,1\}^\delta \to \{0,1\}^\delta$
is a nonsingular linear transformation.
It remains to show that $\sigma$ fixes every basis vector $e_j$
that has $1$ in the $j$th position and $0$ otherwise.
We will prove this for $e_1$, as the other cases are similar.
First, let us prove for example that $\sigma(e_1)\ne t$
for every $t=(t_1, \ldots,t_\delta)$
such that $t_{2} =1$.
Consider a block $I$ such that all $w_1$, \ldots, $w_\delta$
vanish on $I$ except vectors $w_{2}$ and $w_{3}$,
for which $w_{2}|_I=w_{3}|_I = \overline{1}$
(where $\overline{1}$ is the all-one word of length~${2^\delta}$).
Note that $w_{3}$ exists because $\delta \geq 3$.
Then, by~Lemma~\ref{l:v(s)},
$$
v(s)|_I = \sum_{j=1}^{\delta}s_jv_j|_I+s_{2} s_{3} \overline{1},
 \quad s=(s_1, \ldots,s_\delta)\in\{0,1\}^\delta.
$$
It is straightforward to see that the mapping
$v^I(s)=v(s)|_I:\{0,1\}^\delta \to \{0,1\}^{2^\delta}$
is linear in the direction $e_1$,
since $v^I(s)+v^I(s+e_1)=v_1|_I$ is constant;
and it is nonlinear in the direction $t$,
as $v^I(s)+v^I(s+t)=x+(s_{2}t_{3}+s_{3})\overline 1$
is not constant, where $x$ is a constant vector.
However, we see from (\ref{eq:alpha}) that
$v^I(\sigma(s))$ is also linear in the direction $e_1$,
that is, $v^I(s)$ is linear in the direction $\sigma(e_1)$.
We see that $\sigma(e_1)\ne t$.
Similarly, this holds for every $t$ such that $t_{k}=1$
for some ${k}\ne 1$; so,
the only remaining possibility is
$\sigma(e_1)= e_1$.
Similarly, $\sigma(e_j)= e_j$ for every $j$;
hence, the linear map $\sigma$ is just the identity.
\end{IEEEproof}

\begin{example}
In the matrix $S_{0,3}$ of Example~\ref{ex:S}, we can see
that $v^I(s)+v^I(s+t)=v_1|_I$, so it does not depend on $s$
for the forth block $I=\{25,\ldots,32\}$ and for $t=(1,0,0)$.
This is not true for $t\ne (0,0,0), (1,0,0)$.
\end{example}

\begin{proposition}\it\label{c:ub}
If $\delta\geq 2$,
then the order of the automorphism group of $C$ satisfies
$$|\mathrm{Aut}(C)| \leq p\cdot
2^{\frac{1}{2}{\doT\gamma}({\doT\gamma}+1) +2{\doT\gamma}\delta + \frac{3}{2}\delta(\delta+1)}\prod_{i=1}^{{\doT\gamma}}(2^i-1)\prod_{j=1}^{\delta}(2^j-1),
$$
where $p=6$ if $\delta=2$ and $p=1$ if $\delta\geq 3$.
\end{proposition}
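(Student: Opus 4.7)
The plan is to upper-bound $|\aut(C)|$ by combining the structural constraints on any $\psi \in \aut(C)$ already established in Lemmas~\ref{l:fix-block}, \ref{l:123}, and~\ref{l:3+}. First, since $y$ is the unique codeword of weight $n$, any $\psi$ fixes $y$, and by Lemma~\ref{l:123}(i)--(ii) it preserves the flag $\langle y\rangle \subset \langle y,w_1,\ldots,w_\delta\rangle \subset \ker(C)$. Together with the formula of Lemma~\ref{l:123}(iii) for the action on coset representatives $v(s)$, this means $\psi$ is determined by the data $(\psi(w_1),\ldots,\psi(w_\delta),\,\psi(u_1),\ldots,\psi(u_{\dot\gamma}),\,\sigma,\,a,\,c_1,\ldots,c_\delta,\,b_1,\ldots,b_{\dot\gamma})$; indeed, since $y,w_j,u_i,v_j=v(e_j)$ generate $(C,\star)$ and the generator matrix of a Hadamard code has pairwise distinct columns, the coordinate permutation $\psi$ is uniquely recovered from its action on codewords.

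To count the admissible choices of $(\psi(w_j),\psi(u_i))$: by Lemma~\ref{l:123}(i), the tuple $(\psi(w_1),\ldots,\psi(w_\delta))$ taken modulo $\langle y\rangle$ must be an ordered basis of $\langle y,w_1,\ldots,w_\delta\rangle/\langle y\rangle \cong \Z_2^\delta$, of which there are $2^{\delta(\delta-1)/2}\prod_{j=1}^\delta(2^j-1)$, each lifting in $2^\delta$ ways. Similarly, by Lemma~\ref{l:123}(ii), $(\psi(u_1),\ldots,\psi(u_{\dot\gamma}))$ modulo $\langle y,w_1,\ldots,w_\delta\rangle$ must be an ordered basis of $\Z_2^{\dot\gamma}$, with $2^{(\delta+1)\dot\gamma}$ lifts. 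Multiplying yields
\[
N_w\cdot N_u \;=\; 2^{\delta(\delta+1)/2 \,+\, (\delta+1)\dot\gamma \,+\, \dot\gamma(\dot\gamma-1)/2}\prod_{j=1}^{\delta}(2^j-1)\prod_{i=1}^{\dot\gamma}(2^i-1).
\]

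For each such choice, the remaining freedom lies in the subgroup $H\leq\aut(C)$ of automorphisms fixing every $w_j$ and every $u_i$. For $\psi\in H$, Lemma~\ref{l:123}(iv)--(v) force $\sigma$ to be a linear permutation of $\Z_2^\delta$ and each of the $1+\delta+\dot\gamma$ shift functions $a,c_1,\ldots,c_\delta,b_1,\ldots,b_{\dot\gamma}\colon \Z_2^\delta\to\Z_2$ to be linear (hence $2^\delta$ possibilities each). When $\delta\geq 3$, Lemma~\ref{l:3+} forces $\sigma=\mathrm{id}$, so $p=1$; when $\delta=2$, $\sigma$ ranges over the $p=6$ linear permutations of $\Z_2^2$. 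This gives $|H|\leq p\cdot 2^{(1+\delta+\dot\gamma)\delta}=p\cdot 2^{\delta+\delta^2+\delta\dot\gamma}$.

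Multiplying $N_w\cdot N_u\cdot |H|$ and collecting the exponents of $2$ produces $\frac{\delta(\delta+1)}{2}+\delta+\delta^2+2\delta\dot\gamma+\frac{\dot\gamma(\dot\gamma-1)}{2}+\dot\gamma=\frac{3\delta(\delta+1)}{2}+2\dot\gamma\delta+\frac{\dot\gamma(\dot\gamma+1)}{2}$, which matches the desired bound. The main subtle point is the claim that the listed data determines $\psi$ uniquely: the $\star$-generators $y,w_j,u_i,v_j$ generate all of $C$, so the action on them fixes the action on every codeword, and distinct columns of the Hadamard generator matrix force two coordinate permutations with the same action on codewords to coincide. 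Note that for an upper bound it is not necessary to verify that every such data tuple is realized by a genuine permutation automorphism; this will be addressed later when the matching lower bound is constructed.
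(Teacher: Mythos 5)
Your proof is correct and follows essentially the same route as the paper: bound the number of admissible images of $y,w_1,\dots,w_\delta,u_1,\dots,u_{\dot\gamma}$ (your quotient-basis count equals the paper's sequential linear-independence count in (\ref{eq:Kways})), and then bound the stabilizer using the linearity of $\sigma$ and of $a,c_j,b_i$ from Lemma~\ref{l:123} together with Lemma~\ref{l:3+} for $\delta\ge 3$. The only blemish is the appeal to $\star$-generation to argue that the listed data determines $\psi$ on all of $C$ --- a coordinate permutation is not a priori a $\star$-homomorphism --- but the fact you need follows directly from ordinary additivity of $\psi$ and the coset decomposition $C=\bigcup_{s}(\ker(C)+v(s))$, so nothing is lost.
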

\begin{IEEEproof}
 Let $\psi\in \mathrm{Aut}(C)$.
 We evaluate the number of possibilities
 for the matrices $\psi(K)$ and $\psi(S)$.
 Clearly, $\psi(y)=y$.
 By Lemma~\ref{l:123}, for every $j$ from $1$ to $\delta$, $\psi(w_j)$ is a linear combination
 of $y$, $w_1$, \ldots, $w_\delta$.
 However, as $\psi(w_j)$ must be linearly independent on $y$, $\psi(w_1)$, \ldots, $\psi(w_{j-1})$,
 there are at most $2^{{\delta}+1}-2^j$ possibilities for $\psi(w_j)$, for each choice of $y$, $\psi(w_1)$, \ldots, $\psi(w_{j-1})$.
 Similarly, for every $i$ from $1$ to ${\doT\gamma}$, there are at most $2^{\delta+{\doT\gamma}+1}-2^{\delta+i}$ possibilities for $\psi(u_i)$.
In total, $\psi(K)$ is one of the
\begin{multline}\label{eq:Kways}
 \prod_{j=1}^{\delta}(2^{\delta+1}-2^j) \prod_{i=1}^{{\doT\gamma}} (2^{\delta+{\doT\gamma}+1}-2^{\delta+i})  \\[-1mm]
=2^{\frac{1}{2}{\doT\gamma}({\doT\gamma}+1) +{\doT\gamma}\delta + \frac{1}{2}\delta(\delta+1)}\prod_{i=1}^{{\doT\gamma}}(2^i-1)\prod_{j=1}^{\delta}(2^j-1)
\end{multline}
matrices obtained from $K$ by nonsingular transformations as above.

Now suppose that we know $\psi(K)$ and count the possibilities for $\psi(S)$.
W.l.o.g., $\psi(K)=K$. First, consider the case $\delta\geq 3$.
By Lemmas~\ref{l:123}(iii), \ref{l:123}(v) and \ref{l:3+},
all $\psi(v(s))$ are uniquely defined
by $1+\delta+{\doT\gamma}$ linear functions
$a$, $c_1$, \ldots, $c_\delta$, $b_1$, \ldots, $b_{\doT\gamma}$: $\{0,1\}^\delta \to \{0,1\}$.
Note that, in the current corollary, we do not state yet that every possibility is realizable by some $\psi$.
We totally have at most $(2^\delta)^{1+{\doT\gamma}+\delta}$ choices for $\psi(S)$, given $\psi(K)$.
Multiplying this last value by (\ref{eq:Kways}), we obtain the required bound.
Finally, if $\delta=2$, then the same arguments work with the only exception that there are additional possibilities to choose a nonsingular linear mapping $\sigma:\{0,1\}^2 \to \{0,1\}^2$, accordingly to Lemma~\ref{l:123}(iii) and \ref{l:123}(iv), which gives the factor $p=6=(2^2-1)(2^2-2)$.
\end{IEEEproof}

\section[The permutation automorphism group of a Z2Z4-linear Hadamard code]{The permutation automorphism group of a $\Z_2\Z_4$-linear Hadamard code}\label{s:paut}
\setstretch{1.00}
By Lemma \ref{l:13}, any nonsingular affine transformation of $\Z_2^{\doT\gamma} \times \Z_4^\delta$ belongs to $\mathrm{Aut}(C)$.
Therefore, since for $\delta\geq 3$, the number of nonsingular affine transformations coincides
with the upper bound given in Proposition~\ref{c:ub}, we obtain the following result:

\begin{theorem}\it\label{th:d3}
The permutation automorphism group of the $\Z_2\Z_4$-linear Hadamard code $C$ of type $(\alpha,\beta;\doT\gamma+1, \delta)$,
with $\delta\geq 3$, is the group of nonsingular affine transformations of $\Z_2^{\doT\gamma} \times \Z_4^\delta$. Therefore, its order is
 $$|\mathrm{Aut}(C)|=
2^{\frac{1}{2}{\doT\gamma}({\doT\gamma}+1) +2{\doT\gamma}\delta + \frac{3}{2}\delta(\delta+1)}\prod_{i=1}^{{\doT\gamma}}(2^i-1)\prod_{j=1}^{\delta}(2^j-1).
$$
\end{theorem}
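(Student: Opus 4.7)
The proof will proceed by sandwiching $\aut(C)$ between matching lower and upper bounds. Proposition~\ref{c:ub} already supplies the upper bound (with the factor $p=1$ when $\delta\ge 3$), so the task reduces to exhibiting the group of nonsingular affine transformations of $\Z_2^{\dot\gamma} \times \Z_4^\delta$ as a subgroup of $\aut(C)$ and verifying that its cardinality meets that bound exactly. Equality of the orders then forces the inclusion to be an equality of groups.

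For the inclusion, recall from Lemma~\ref{l:constr} that the codewords of $C=C_{\dot\gamma,\delta}$ are precisely the functions $h=\varphi^+\circ f:\Z_2^{\dot\gamma}\times\Z_4^\delta\to\Z_2$ with $f\in\mathcal{A}$, so the coordinates of $C$ are indexed by the elements of the group $\Z_2^{\dot\gamma}\times\Z_4^\delta$. Given a nonsingular affine transformation $\sigma$ of $\Z_2^{\dot\gamma}\times\Z_4^\delta$ and a codeword $h=\varphi^+\circ f$, I will observe that
\[
(\sigma h)(v) \;=\; h(\sigma^{-1}(v)) \;=\; \varphi^+\bigl(f(\sigma^{-1}(v))\bigr),
\]
and since the composition $f\circ\sigma^{-1}$ is again an affine map from $\Z_2^{\dot\gamma}\times\Z_4^\delta$ to $\Z_4$ (though not necessarily in $\mathcal{A}$, because it need not send $\overline 0$ to $\{0,2\}$), Lemma~\ref{l:13} guarantees that $\varphi^+\circ(f\circ\sigma^{-1})$ still lies in $C$. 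Thus $\sigma\in\aut(C)$, and distinct affine permutations act differently on the coordinate set, so the affine group embeds injectively into $\aut(C)$.

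It remains to count the affine permutations of $\Z_2^{\dot\gamma}\times\Z_4^\delta$. This is a direct transcription of the counting argument from the proof of Theorem~\ref{th:MAutZ4}, applied with parameters $(\dot\gamma,\delta)$ in the roles of $(\gamma,\dot\delta)$. The number of translations is $T=2^{\dot\gamma+2\delta}$. A linear automorphism is determined by the images $y_1,\dots,y_\delta$ of a basis of the $\Z_4$-part and $x_1,\dots,x_{\dot\gamma}$ of a basis of the $\Z_2$-part. Successively choosing each $y_j$ as an element of order $4$ independent of $y_1,\dots,y_{j-1}$ gives
\[
Y=\prod_{j=1}^{\delta}\bigl(4^{\delta}2^{\dot\gamma}-2^{\delta}2^{\dot\gamma}2^{j-1}\bigr)
=2^{\delta^2+\dot\gamma\delta+\frac12\delta(\delta-1)}\prod_{j=1}^{\delta}(2^j-1),
\]
and then choosing each $x_i$ as an element of order $2$ independent of all previous basis images gives
\[
X=\prod_{i=1}^{\dot\gamma}\bigl(2^{\delta}2^{\dot\gamma}-2^{\delta}2^{i-1}\bigr)
=2^{\dot\gamma\delta+\frac12\dot\gamma(\dot\gamma-1)}\prod_{i=1}^{\dot\gamma}(2^i-1).
\]
Multiplying $T\cdot Y\cdot X$ and simplifying the exponent of $2$ yields exactly $\frac12\dot\gamma(\dot\gamma+1)+2\dot\gamma\delta+\frac32\delta(\delta+1)$, which coincides with the exponent appearing in Proposition~\ref{c:ub}. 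Hence the subgroup of affine permutations already attains the upper bound, forcing $\aut(C)$ to equal it.

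The only subtlety that might require care is verifying that the exponent arithmetic truly matches Proposition~\ref{c:ub}; this is a routine but easy-to-miscount calculation, so I will lay it out step by step. Everything else is essentially mechanical once Lemma~\ref{l:13} and Proposition~\ref{c:ub} are invoked.
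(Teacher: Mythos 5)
Your proposal is correct and follows essentially the same route as the paper: the lower bound comes from Lemma~\ref{l:13} (every nonsingular affine transformation of $\Z_2^{\dot\gamma}\times\Z_4^\delta$ permutes the codewords, since $\varphi^+\circ(f\circ\sigma^{-1})$ stays in $C$ even when $f\circ\sigma^{-1}\notin\mathcal{A}$), the upper bound is Proposition~\ref{c:ub} with $p=1$, and the two cardinalities are matched by the same generator-image count used in the proof of Theorem~\ref{th:MAutZ4}. Your exponent arithmetic, $\dot\gamma+2\delta+\delta^2+\dot\gamma\delta+\frac12\delta(\delta-1)+\dot\gamma\delta+\frac12\dot\gamma(\dot\gamma-1)=\frac12\dot\gamma(\dot\gamma+1)+2\dot\gamma\delta+\frac32\delta(\delta+1)$, checks out.
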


It remains to consider the case $\delta = 2$.

\begin{lemma}\it\label{l:case2}
Let $C$ be the $\Z_2\Z_4$-linear Hadamard code of type $(\alpha,\beta;\doT\gamma+1, 2)$.
\begin{enumerate}
\item[(i)] Any nonsingular linear mapping $\sigma:\{0,1\}^2 \to \{0,1\}^2$, as defined in Lemma~\ref{l:123}(iii),
corresponds to an automorphism $\psi$ of $C$ that stabilizes every block.

\item[(ii)] Moreover, if this automorphism $\psi$ is not identical, then it is not an affine transformation of $\Z_2^{\doT\gamma} \times \Z_4^2$.
\end{enumerate}
\end{lemma}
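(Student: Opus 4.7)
The plan is to handle the two parts separately via a coordinate labelling within each block. For part (i), inside each block $I$ I would label the four coordinates by $\ell_I(k):=(v_1(k),v_2(k))\in\{0,1\}^2$. This is a bijection: one checks directly that as $v_0$ ranges over the four preimages in $\Z_2^{\dot\gamma}\times\Z_4^2$ of a fixed element of $\Z_2^{\dot\gamma}\times\Z_2^2$, the pair $(\varphi^+(\beta_1(v_0)),\varphi^+(\beta_2(v_0)))$ runs through all four values of $\{0,1\}^2$. By Lemma~\ref{l:v(s)}, in this labelling we have $v(s)(k)=s\cdot\ell_I(k)+s_1s_2\,\epsilon_I$, where $\epsilon_I\in\{0,1\}$ is the constant value of $w_1\bullet w_2$ on $I$ (so $\epsilon_I=1$ exactly when $w_1|_I=w_2|_I=\overline{1}$).

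Fix a nonsingular linear $\sigma:\{0,1\}^2\to\{0,1\}^2$ and set $q_\sigma(s):=s_1s_2+\sigma(s)_1\sigma(s)_2$. A short case check over the six $\sigma$---or the observation that, since $\sigma$ permutes the three nonzero vectors of $\{0,1\}^2$, the $1$-set of $q_\sigma$ among these three vectors is the symmetric difference $\{(1,1)\}\triangle\sigma^{-1}(\{(1,1)\})$, which is either empty or a two-element subset, and in every case coincides with the nonzero $1$-set of one of the linear forms $0$, $s_1$, $s_2$, $s_1+s_2$---shows that $q_\sigma$ is linear on $\{0,1\}^2$; write $q_\sigma(s)=a_\sigma\cdot s$. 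I would then define $\psi$ to act on each block $I$ via the affine label transformation $\ell\mapsto\sigma^{-T}(\ell+\epsilon_I\,a_\sigma)$. A direct substitution, using the expression above for $v(s)(k)$ and the identity $s\cdot a_\sigma=q_\sigma(s)$, gives $\psi(v(s))(k)=v(\sigma(s))(k)$ on every block, so $\psi(v(s))=v(\sigma(s))$ as codewords. Since $\psi$ stabilizes blocks and kernel codewords are constant on blocks, $\psi$ also fixes $\ker(C)$ pointwise; hence it preserves $C=\bigcup_s(v(s)+\ker(C))$ and is an automorphism, non-identical exactly when $\sigma\neq\mathrm{id}$.

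For part (ii), the block partition of the coordinates coincides with the fibre partition of the quotient $q:\Z_2^{\dot\gamma}\times\Z_4^2\to\Z_2^{\dot\gamma}\times\Z_2^2$ that reduces each $\Z_4$-coordinate modulo~$2$; this follows at once from the explicit form of $K$ (whose rows are $y$, the $w_j=\Phi(2\beta_j)$, and the $u_i$). Suppose for contradiction that $\psi$ were induced by an affine permutation $\tau(v)=Lv+t$. Block stabilization forces $\tau(v)\equiv v\pmod 2$ for all $v$, i.e.\ the image of $L-I$ lies in $\ker(q)=\{0\}\times 2\Z_4^2$. Writing $L$ in block form with respect to $\Z_2^{\dot\gamma}\times\Z_4^2$, this means the $\Z_4^2\to\Z_4^2$ diagonal block $D$ satisfies $D\equiv I\pmod 2$. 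The computation used in Lemma~\ref{l:123} identifies the $\sigma$ induced by $\tau$ with $D^T\bmod 2$, so $\sigma=\mathrm{id}$, contradicting the hypothesis.

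The main obstacle is establishing the linearity of $q_\sigma$; without it the construction on blocks with $\epsilon_I=1$ would require a kernel element of $C$ restricting to the indicator $w_1\bullet w_2$ of those blocks, but $w_1\bullet w_2$ lies only in the linear span $\langle C\rangle$ and not in $\ker(C)$. This linearity is specific to $\delta=2$---where $\mathrm{GL}(2,\Z_2)\cong S_3$ acts $2$-transitively on the three nonzero vectors---and is exactly what produces the extra factor of $6$ appearing in Proposition~\ref{c:ub}.
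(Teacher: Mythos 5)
Your proof is correct, but it takes a genuinely different route from the paper's in both parts. For (i), the paper verifies the six permutations explicitly for $\dot\gamma=0$ (via the displayed matrix $S_{0,2}$ and the indicated coordinate swaps) and then lifts to general $\dot\gamma$ by observing that each macroblock of $S_{\dot\gamma,2}$ repeats the corresponding macroblock of $S_{0,2}$ coordinatewise; your labelling $\ell_I(k)=(v_1(k),v_2(k))$ of each block together with the closed-form label transformation $\ell\mapsto\sigma^{-T}(\ell+\epsilon_I a_\sigma)$ replaces this by a uniform algebraic construction, and your check that $q_\sigma(s)=s_1s_2+\sigma(s)_1\sigma(s)_2$ is linear (the indicator of any two-element subset of the three nonzero vectors of $\{0,1\}^2$ is a linear form) is exactly the point where $\delta=2$ is special, which the paper leaves implicit in its ``easy to check.'' For (ii), the paper notes that the constructed permutations fix a coordinate in every block and then shows that no nonidentity affine permutation can do so (by composing $\psi-\mathrm{id}$ with a linear functional onto $\{0,2\}$ and invoking Lemma~\ref{l:ker}); you instead argue via the mod-$2$ reduction that block stabilization forces the linear part to be congruent to the identity, hence $\sigma=\mathrm{id}$. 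Both work, but one step of yours is under-justified: the identification of the $\sigma$ induced by an affine $\tau$ with $D^{T}\bmod 2$ is not ``a computation used in Lemma~\ref{l:123}'' --- that lemma only proves existence and linearity of $\sigma$ --- so you must supply it, for instance by noting that if $\tau(v)\equiv v \pmod 2$ for all $v$, then for every $f\in\mathcal{A}$ the difference $f\circ\tau^{-1}-f$ is an affine $\{0,2\}$-valued function, so $\tau$ fixes every coset of $\ker(C)$ and therefore induces $\sigma=\mathrm{id}$. With that step written out, your argument is complete, and it has the advantage of giving explicit formulas for the six extra automorphisms counted in Theorem~\ref{th:d3-2}.
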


\begin{IEEEproof}
(i)
There are exactly $p=6=(2^2-1)(2^2-2)$ such permutations.
For $\doT\gamma=0$, it is easy to check that the result is true.
The coordinate permutations $\psi$ corresponding to two permutations $\sigma$ are shown
on the following diagram; the other three nonidentity variants for $\sigma$ can be expressed from these two
using composition.
\begin{IEEEeqnarray*}{rCl}
S_{0,2}&=&
\left(\!\begin{array}{c@{}c@{}c@{}c@{\ }c@{}c@{}c@{}c@{\ }c@{}c@{}c@{}c@{\ }c@{}c@{}c@{}c@{\,}r}
0&0&0&0& 0&0&0&0& 0&0&0&0& 0&0&0&0\\
0&0&1&1& 0&0&1&1& 0&1&0&1& 0&1&0&1&\raisebox{-.7ex}[0.1ex][0.1ex]{$\Lsh$}\\
0&1&0&1& 0&1&0&1& 0&0&1&1& 0&1&1&0&\raisebox{ .7ex}[0.1ex][0.1ex]{\rotatebox[origin=c]{180}{$\Rsh$}}\\
0&1&1&0& 0&1&1&0& 0&1&1&0& 1&1&0&0\\
& \makebox[0.8ex][l]{\rotatebox[origin=c]{90}{$\Rsh$}}&\makebox[0.8ex][r]{\rotatebox[origin=c]{270}{$\Lsh$}} &&
& \makebox[0.8ex][l]{\rotatebox[origin=c]{90}{$\Rsh$}}&\makebox[0.8ex][r]{\rotatebox[origin=c]{270}{$\Lsh$}} &&
& \makebox[0.8ex][l]{\rotatebox[origin=c]{90}{$\Rsh$}}&\makebox[0.8ex][r]{\rotatebox[origin=c]{270}{$\Lsh$}} &&
&& \makebox[0.8ex][l]{\rotatebox[origin=c]{90}{$\Rsh$}}&\makebox[0.8ex][r]{\rotatebox[origin=c]{270}{$\Lsh$}} &
\end{array}\!\!\right) 
\\&=&
\!\left(\begin{array}{c@{}c@{}c@{}c@{\ }c@{}c@{}c@{}c@{\ }c@{}c@{}c@{}c@{\ }c@{}c@{}c@{}c@{\,}r}
0&0&0&0& 0&0&0&0& 0&0&0&0& 0&0&0&0\\
0&0&1&1& 0&0&1&1& 0&1&0&1& 0&1&0&1\\
0&1&0&1& 0&1&0&1& 0&0&1&1& 0&1&1&0&\raisebox{-.7ex}[0.1ex][0.1ex]{$\Lsh$}\\
0&1&1&0& 0&1&1&0& 0&1&1&0& 1&1&0&0&\raisebox{ .7ex}[0.1ex][0.1ex]{\rotatebox[origin=c]{180}{$\Rsh$}}\\
&& \makebox[0.8ex][l]{\rotatebox[origin=c]{90}{$\Rsh$}}&\makebox[0.8ex][r]{\rotatebox[origin=c]{270}{$\Lsh$}} &
&& \makebox[0.8ex][l]{\rotatebox[origin=c]{90}{$\Rsh$}}&\makebox[0.8ex][r]{\rotatebox[origin=c]{270}{$\Lsh$}} &
& \makebox[0.8ex][l]{\rotatebox[origin=c]{90}{$\Rsh$}}&&\makebox[0.8ex][r]{\rotatebox[origin=c]{270}{$\Lsh$}} &
 \makebox[0.8ex][l]{\rotatebox[origin=c]{90}{$\Rsh$}}&&\makebox[0.8ex][r]{\rotatebox[origin=c]{270}{$\Lsh$}} &&
\end{array}\!\!\right)
\end{IEEEeqnarray*}
For $\doT\gamma\ge 1$,
the $4$ rows $v(s)$  of $S_{\doT\gamma,2}$, $s\in \{0,1\}^2$, contain in each one of their $4$ macroblocks of size $2^{\doT\gamma +2}$,
the same coordinates corresponding to the same macroblock in the row $v(s)$ of $S_{0,2}$ repeated $2^{\doT\gamma}$ times. Therefore,
the result follows.

(ii) We first note that, as follows from the first part of the proof,
the considered automorphisms fix a coordinate from every block.
Now, let us consider an arbitrary affine permutation $\psi$ that fixes a coordinate from every block
and show that it is the identity permutation $\mathrm{id}$.
Assume, seeking a contradiction, that $\psi\ne\mathrm{id}$.
Consider the map $\xi = \psi - \mathrm{id}$.
It is an affine map from $\Z_2^{\doT\gamma} \times \Z_4^2$ to
$\Z_2^{\doT\gamma} \times \Z_4^2$.
It possesses the zero value in a point from every block and a nonzero value in at least one point.
Consider a nonconstant linear mapping $\lambda$
from the image of $\xi$ onto $\{0,2\}$.
Then, the map $\lambda\xi$ from $\Z_2^{\doT\gamma} \times \Z_4^2$ to $\{0,2\}$
is affine and is not constantly zero.
By Lemma~\ref{l:ker}, $\lambda\xi$ belongs to the kernel of $C$.
However, as $\lambda\xi$ is zero in a point from every block, by the definition of the blocks, it is constantly zero.
We get a contradiction.
\end{IEEEproof}

\begin{theorem}\it\label{th:d3-2}
The automorphism group of the $\Z_2\Z_4$-linear Hadamard code $C$ of type $(\alpha,\beta;\doT\gamma+1, 2)$
consists of all permutations expressed as $\psi\alpha$, where $\alpha$
is a nonsingular affine transformation of $\Z_2^{\doT\gamma} \times \Z_4^2$
and $\psi$ is one of the six permutations defined in Lemma~\ref{l:case2}.
The order of this automorphism group is
 $$|\mathrm{Aut}(C)|=
6\cdot 2^{\frac{1}{2}{\doT\gamma}({\doT\gamma}+1) +4{\doT\gamma} +
9}\cdot 3 \prod_{i=1}^{{\doT\gamma}}(2^i-1).
$$
\end{theorem}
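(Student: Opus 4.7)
The plan is to exhibit the claimed set of automorphisms as a subset of $\aut(C)$ and show that its cardinality saturates the upper bound of Proposition~\ref{c:ub} in the case $\delta=2$. First, every composition $\psi\alpha$ lies in $\aut(C)$: the affine factor $\alpha$ is an automorphism by Lemma~\ref{l:13} (as observed at the start of Section~\ref{s:paut}), and each of the six permutations $\psi$ is an automorphism by Lemma~\ref{l:case2}(i). Hence the whole set $\{\psi\alpha\}$ is contained in $\aut(C)$.

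Next I would establish uniqueness of the factorization. If $\psi_1\alpha_1 = \psi_2\alpha_2$, then $\psi_2^{-1}\psi_1 = \alpha_2\alpha_1^{-1}$, and the right-hand side is an affine transformation of $\Z_2^{\dot\gamma}\times\Z_4^2$. From the explicit coordinate permutations displayed in the proof of Lemma~\ref{l:case2}(i), each of the six $\psi$-maps fixes the first coordinate of every block, so $\psi_2^{-1}\psi_1$ inherits this property. The argument given in the proof of Lemma~\ref{l:case2}(ii), applied to $\alpha_2\alpha_1^{-1}$, then forces $\alpha_2\alpha_1^{-1}=\mathrm{id}$, whence $\psi_1=\psi_2$ and $\alpha_1=\alpha_2$.

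It remains to compute the cardinality. By the preceding step, the set $\{\psi\alpha\}$ has exactly $6$ times the number $N$ of nonsingular affine transformations of $\Z_2^{\dot\gamma}\times\Z_4^2$. A direct enumeration of translations and linear automorphisms (following the last paragraph of the proof of Theorem~\ref{th:MAutZ4}, adapted to the group $\Z_2^{\dot\gamma}\times\Z_4^2$) yields
$$N=3\cdot 2^{\frac{1}{2}\dot\gamma(\dot\gamma+1)+4\dot\gamma+9}\prod_{i=1}^{\dot\gamma}(2^i-1),$$
which is exactly the contribution $\prod_{i=1}^{\dot\gamma}(2^i-1)\prod_{j=1}^{2}(2^j-1)\cdot 2^{\frac{1}{2}\dot\gamma(\dot\gamma+1)+4\dot\gamma+9}$ appearing in the upper bound of Proposition~\ref{c:ub} for $\delta=2$. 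Multiplying by the factor $p=6$ from that proposition, the exhibited set meets the upper bound. Since it is contained in $\aut(C)$, equality must hold, yielding both the structural description and the claimed order.

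The main obstacle is the uniqueness step: one must verify that all six $\psi$-maps from Lemma~\ref{l:case2}(i) share a common fixed coordinate in each block, so that the affine-fixation argument of Lemma~\ref{l:case2}(ii) can be applied to $\psi_2^{-1}\psi_1$. This is readable from the diagrams in that lemma's proof, but it is the one place where a careful inspection of the explicit construction of the six $\psi$'s is required; everything else is enumeration and a matching of exponents with the bound of Proposition~\ref{c:ub}.
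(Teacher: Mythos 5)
Your proof is correct and follows essentially the same route as the paper's: exhibit the compositions $\psi\alpha$ as automorphisms, show they are pairwise distinct via Lemma~\ref{l:case2}(ii), and saturate the upper bound of Proposition~\ref{c:ub} for $\delta=2$. The only quibble is in your uniqueness step: rather than verifying that all six $\psi$'s share a common fixed coordinate in each block (your claim about ``the first coordinate'' is not literally true for all six), it is cleaner to note that the six $\psi$'s form a group, so $\psi_2^{-1}\psi_1$ is again one of the six and Lemma~\ref{l:case2}(ii) applies to it directly.
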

\begin{IEEEproof}
By Lemma~\ref{l:case2}(i), all the permutations are automorphisms of $C$.
By Lemma~\ref{l:case2}(ii), they are mutually different.
We can see that the lower bound given by counting all such permutations coincides with the upper bound
from Proposition~\ref{c:ub}.
\end{IEEEproof}


\begin{thebibliography}{10}
\providecommand{\url}[1]{#1}
\csname url@samestyle\endcsname
\providecommand{\newblock}{\relax}
\providecommand{\bibinfo}[2]{#2}
\providecommand{\BIBentrySTDinterwordspacing}{\spaceskip=0pt\relax}
\providecommand{\BIBentryALTinterwordstretchfactor}{4}
\providecommand{\BIBentryALTinterwordspacing}{\spaceskip=\fontdimen2\font plus
\BIBentryALTinterwordstretchfactor\fontdimen3\font minus
  \fontdimen4\font\relax}
\providecommand{\BIBforeignlanguage}[2]{{%
\expandafter\ifx\csname l@#1\endcsname\relax
\typeout{** WARNING: IEEEtranS.bst: No hyphenation pattern has been}%
\typeout{** loaded for the language `#1'. Using the pattern for}%
\typeout{** the default language instead.}%
\else
\language=\csname l@#1\endcsname
\fi
#2}}
\providecommand{\BIBdecl}{\relax}
\BIBdecl

\bibitem{AvgSolHed:2005}
S.~V. Avgustinovich, F.~I. Solov'eva, and O.~Heden, ``On the structure of
  symmetry groups of {V}asil'ev codes,''
  \emph{\href{http://link.springer.com/journal/11122}{Probl. Inf. Transm.}},
  vol.~41, no.~2, pp. 105--112, 2005, \DOI{10.1007/s11122-005-0015-5}

\bibitem{BGH83}
H.~Bauer, B.~Ganter, and F.~Hergert, ``Algebraic techniques for nonlinear
  codes,'' \emph{\href{http://link.springer.com/journal/493}{Combinatorica}},
  vol.~3, no.~1, pp. 21--33, 1983, \DOI{10.1007/BF02579339}

\bibitem{Bier}
J.~Bierbrauer, \emph{Introduction to Coding Theory}, ser. Discrete and
  Combinatorial Mathematics Series.\hskip 1em plus 0.5em minus 0.4em\relax
  Chapman \& Hall/CRC, 2005.

\bibitem{BFPRV2010}
J.~Borges, C.~Fern{\'a}ndez-C{\'o}rdoba, J.~Pujol, J.~Rif{\`a}, and
  M.~Villanueva, ``${Z}_2{Z}_4$-linear codes: Generator matrices and duality,''
  \emph{\href{http://link.springer.com/journal/10623}{Des. Codes
  Cryptography}}, vol.~54, no.~2, pp. 167--179, 2010,
  \DOI{10.1007/s10623-009-9316-9} arXiv:\href{http://arxiv.org/abs/0710.1149}{0710.1149}

\bibitem{BorPheRif:2003}
J.~Borges, K.~T. Phelps, and J.~Rif\`a, ``The rank and kernel of extended
  $1$-perfect {$Z_4$}-linear and additive non-{$Z_4$}-linear codes,''
  \emph{\href{http://ieeexplore.ieee.org/xpl/RecentIssue.jsp?punumber=18}{IEEE
  Trans. Inf. Theory}}, vol.~49, no.~8, pp. 2028--2034, 2003,
  \DOI{10.1109/TIT.2003.814490}

\bibitem{BorRif:1999}
J.~Borges and J.~Rif\`a, ``A characterization of $1$-perfect additive codes,''
  \emph{\href{http://ieeexplore.ieee.org/xpl/RecentIssue.jsp?punumber=18}{IEEE
  Trans. Inf. Theory}}, vol.~45, no.~5, pp. 1688--1697, 1999,
  \DOI{10.1109/18.771247}

\bibitem{Delsarte:1973}
P.~Delsarte, \emph{An Algebraic Approach to Association Schemes of Coding
  Theory}, ser. Philips Res. Rep., Supplement, 1973, vol.~10.

\bibitem{DelsLev:98}
P.~Delsarte and V.~I. Levenshtein, ``Association schemes and coding theory,''
  \emph{\href{http://ieeexplore.ieee.org/xpl/RecentIssue.jsp?punumber=18}{IEEE
  Trans. Inf. Theory}}, vol.~44, no.~6, pp. 2477--2504, 1998,
  \DOI{10.1109/18.720545}

\bibitem{FePhVi:2011}
C.~Fern\'{a}ndez-C\'{o}rdoba, K.~T. Phelps, and M.~Villanueva, 
``Involutions in binary perfect codes,''
  \emph{\href{http://ieeexplore.ieee.org/xpl/RecentIssue.jsp?punumber=18}{IEEE
  Trans. Inf. Theory}}, vol.~57, no.~9, pp. 5926--5932, 2011,
  \DOI{10.1109/TIT.2011.2162185}

\bibitem{FePuVi08rk}
C.~Fern\'{a}ndez-C\'{o}rdoba, J.~Pujol, and M.~Villanueva,
  ``${Z}_2{Z}_4$-linear codes: Rank and kernel,''
  \emph{\href{http://link.springer.com/journal/10623}{Des. Codes
  Cryptography}}, vol.~56, no.~1, pp. 43--59, 2010,
  \DOI{10.1007/s10623-009-9340-9} arXiv:\href{http://arxiv.org/abs/0807.4247}{0807.4247}

\bibitem{HammonsOth:Z4_linearity}
A.~R. Hammons, Jr, P.~V. Kumar, A.~R. Calderbank, N.~J.~A. Sloane, and
  P.~Sol\'e, ``The {$Z_4$}-linearity of {K}erdock, {P}reparata, {G}oethals, and
  related codes,''
  \emph{\href{http://ieeexplore.ieee.org/xpl/RecentIssue.jsp?punumber=18}{IEEE
  Trans. Inf. Theory}}, vol.~40, no.~2, pp. 301--319, 1994,
  \DOI{10.1109/18.312154}

\bibitem{Hed:2005}
O.~Heden, ``On the symmetry group of perfect $1$-error correcting binary
  codes,'' \emph{\href{http://www.combinatorialmath.ca/jcmcc/index.html}{J.
  Comb. Math. Comb. Comput.}}, vol.~52, pp. 109--115, 2005.

\bibitem{HedPasWes:2009}
O.~Heden, F.~Pasticci, and T.~Westerb{\"a}ck, ``On the existence of extended
  perfect binary codes with trivial symmetry group,''
  \emph{\href{http://aimsciences.org/journals/amc/}{Adv. Math. Commun.}},
  vol.~3, no.~3, pp. 295--309, 2009, \DOI{10.3934/amc.2009.3.295}

\bibitem{HuffmanPless}
W.~C. Huffman and V.~Pless, \emph{Fundamentals of Error-Correcting
  Codes}.\hskip 1em plus 0.5em minus 0.4em\relax Cambridge Univ. Press, 2003,
  \DOI{10.1017/CBO9780511807077}

\bibitem{Kro:2000:Z4_Perf}
\BIBentryALTinterwordspacing
D.~S. Krotov, ``{$Z_4$}-linear perfect codes,''
  \emph{\href{http://www.mathnet.ru/php/journal.phtml?jrnid=da\&option_lang=eng}{Diskretn.
  Anal. Issled. Oper.}, Ser.1}, vol.~7, no.~4, pp. 78--90, 2000,  in Russian.
  Translated at arXiv:\href{http://arxiv.org/abs/0710.0198}{0710.0198}
\BIBentrySTDinterwordspacing

\bibitem{Kro:2001:Z4_Had_Perf}
D.~S. Krotov, ``{$Z_4$}-linear {H}adamard and extended perfect codes,'' in
  \emph{{WCC2001}, International Workshop on Coding and Cryptography}, ser.
  \href{http://www.sciencedirect.com/science/journal/15710653}{Electron. Notes
  Discrete Math.}, D.~Augot and C.~Carlet, Eds.\hskip 1em plus 0.5em minus
  0.4em\relax Elsevier B.V., 2001, vol.~6, pp. 107--112,
  \DOI{10.1016/S1571-0653(04)00161-1}

\bibitem{Kro:2011castle}
D.~S. Krotov, ``On the automorphism groups of the additive $1$-perfect binary
  codes,'' in \emph{3rd International Castle Meeting on Coding Theory and
  Application}, J.~Borges and M.~Villanueva, Eds.\hskip 1em plus 0.5em minus
  0.4em\relax Bellaterra, Spain: Universitat Aut\`onoma de Barcelona. Servei de
  Publicacions, 2011, pp. 171--176. See arXiv:\href{http://arxiv.org/abs/1602.00036}{1602.00036}

\bibitem{KroVil:2014castle}
D.~S. Krotov and M.~Villanueva, ``On the automorphism groups of the
  $z_2z_4$-linear {H}adamard codes and their classification,'' in \emph{Coding
  Theory and Applications: 4th International Castle Meeting, {P}almela Castle,
  {P}ortugal, September 15--18, 2014}, ser. CIM Series in Mathematical
  Sciences, R.~Pinto, P.~Rocha~Malonek, and P.~Vettori, Eds.\hskip 1em plus
  0.5em minus 0.4em\relax Springer, 2015, vol.~3, pp. 237--243,
  \DOIURL{10.1007/978-3-319-17296-5\_25}{10.1007/978-3-319-17296-5_25}

\bibitem{MWS}
F.~J. MacWilliams and N.~J.~A. Sloane, \emph{The Theory of Error-Correcting
  Codes}.\hskip 1em plus 0.5em minus 0.4em\relax Amsterdam, Netherlands: North
  Holland, 1977.

\bibitem{PePuVi11}
J.~Pernas, J.~Pujol, and M.~Villanueva, ``Classification of some families of
  quaternary {R}eed--{M}uller codes,''
  \emph{\href{http://ieeexplore.ieee.org/xpl/RecentIssue.jsp?punumber=18}{IEEE
  Trans. Inf. Theory}}, vol.~57, no.~9, pp. 6043--6051, 2011,
  \DOI{10.1109/TIT.2011.2119465}

\bibitem{PePuVi14}
J.~Pernas, J.~Pujol, and M.~Villanueva, ``Characterization of the automorphism group of quaternary linear
  {H}adamard codes,'' \emph{\href{http://link.springer.com/journal/10623}{Des.
  Codes Cryptography}}, vol.~70, no. 1-2, pp. 105--115, 2014,
  \DOI{10.1007/s10623-012-9678-2}

\bibitem{PheRif:2002}
K.~T. Phelps and J.~Rif\`a, ``On binary $1$-perfect additive codes: Some
  structural properties,''
  \emph{\href{http://ieeexplore.ieee.org/xpl/RecentIssue.jsp?punumber=18}{IEEE
  Trans. Inf. Theory}}, vol.~48, no.~9, pp. 2587--2592, 2002,
  \DOI{10.1109/TIT.2002.801474}

\bibitem{PheRifVil:2006}
K.~T. Phelps, J.~Rif\`a, and M.~Villanueva, ``On the additive ({$Z_4$}-linear
  and non-{$Z_4$}-linear) {H}adamard codes: Rank and kernel,''
  \emph{\href{http://ieeexplore.ieee.org/xpl/RecentIssue.jsp?punumber=18}{IEEE
  Trans. Inf. Theory}}, vol.~52, no.~1, pp. 316--319, 2006,
  \DOI{10.1109/TIT.2005.860464}

\bibitem{PujRifSol:2009}
J.~Pujol, J.~Rif{\`a}, and F.~I. Solov'eva, ``Construction of ${Z}_4$-linear
  {R}eed--{M}uller codes,''
  \emph{\href{http://ieeexplore.ieee.org/xpl/RecentIssue.jsp?punumber=18}{IEEE
  Trans. Inf. Theory}}, vol.~55, no.~1, pp. 99--104, 2009,
  \DOI{10.1109/TIT.2008.2008143} arXiv:\href{http://arxiv.org/abs/0801.3024}{0801.3024}

\bibitem{RifPuj:1997}
J.~Rif\`a and J.~Pujol, ``Translation-invariant propelinear codes,''
  \emph{\href{http://ieeexplore.ieee.org/xpl/RecentIssue.jsp?punumber=18}{IEEE
  Trans. Inf. Theory}}, vol.~43, no.~2, pp. 590--598, 1997,
  \DOI{10.1109/18.556115}

\bibitem{Sol:2007RM}
F.~I. Solov'eva, ``On {$Z_4$}-linear codes with the parameters of
  {R}eed-{M}uller codes,''
  \emph{\href{http://link.springer.com/journal/11122}{Probl. Inf. Transm.}},
  vol.~43, no.~1, pp. 26--32, 2007, \DOI{10.1134/S0032946007010048} 
translated from
  \href{http://www.mathnet.ru/php/journal.phtml?jrnid=ppi\&option_lang=eng}{Probl.
  Peredachi Inf.}, vol.~43, no.~1, pp. 32--38, 2007.
\end{thebibliography}
\setstretch{1.13}
\enlargethispage{-56mm}

\providecommand\href[2]{#2} \providecommand\url[1]{\href{#1}{#1}}
  \def\DOI#1{{\small {DOI}:
  \href{http://dx.doi.org/#1}{#1}}}\def\DOIURL#1#2{{\small{DOI}:
  \href{http://dx.doi.org/#2}{#1}}}

\end{document}